\newtheorem{theorem}{Theorem}
\newtheorem{lemma}{Lemma}
\newtheorem{proposition}{Proposition}
\newtheorem{definition}{Definition}
\title{Short collusion-secure fingerprint codes against three pirates}
\author{Koji Nuida}
\date{Research Center for Information Security (RCIS), National Institute of Advanced Industrial Science and Technology (AIST), 1-18-13 Sotokanda, Chiyoda-ku, Tokyo 101-0021, Japan \\ \url{k.nuida[at]aist.go.jp}}
\begin{document}

\maketitle

\begin{abstract}
In this article, we propose a new construction of probabilistic collusion-secure fingerprint codes against up to three pirates and give a theoretical security evaluation.
Our pirate tracing algorithm combines a scoring method analogous to Tardos codes (J.\ ACM, 2008) with an extension of parent search techniques of some preceding $2$-secure codes.
Numerical examples show that our code lengths are significantly shorter than (about $30\%$ to $40\%$ of) the shortest known $c$-secure codes by Nuida et al.\ (Des.\ Codes Cryptogr., 2009) with $c = 3$.
Some preliminary proposal for improving efficiency of our tracing algorithm is also given.
\end{abstract}

\section{Introduction}
\label{sec:intro}

\subsection{Background and Related Works}
\label{subsec:intro_background}

Recently, digital content distribution services have been widespread by virtue of progress of information technology.
Digitization of content distribution has improved convenience for ordinary people.
However, the digitization also enables malicious persons to perform more powerful attacks, and the amount of illegal content redistribution is increasing very rapidly.
Hence technical countermeasures for such illegal activities are strongly desired.
A use of fingerprint code is a possible solution for such problems, which aims at giving traceability of the attacker (pirate) when an illegally redistributed digital content is found, thus letting the potential attackers abandon to perform actual attacks.

In the context of fingerprint codes, each copy of a content is divided into several segments (common to all copies), in each of which a bit of an encoded user ID is embedded by the content provider by using watermarking technique.
The embedded encoded ID (fingerprint) provides traceability of an adversarial user (pirate) when an unauthorized copy of the content is distributed.
Such a scheme aims at tracing some pirate, without falsely tracing any innocent user, from the fingerprint embedded in the pirated content with an overwhelming probability.
It has been noticed that a coalition of pirates can perform certain strong attacks (collusion attacks) to the fingerprint, therefore any effective fingerprint code should be secure against collusion attacks, called collusion-secure codes.
In particular, if the code is secure against collusion attacks by up to $c$ pirates, then the code is called $c$-secure \cite{BS98}.

Several constructions of collusion-secure codes have been proposed so far.
Among them, the one proposed by Tardos \cite{Tar08} is \lq\lq asymptotically optimal'', in the sense that the order of his code length with respect to the allowable number $c$ of pirates is theoretically the lowest (which is quadratic in $c$).
For improvements of Tardos codes, the constant factor of the asymptotic code length has been reduced by $c$-secure codes given by Nuida et al.\ \cite{NFHKWOI09} to approximately $5.35\%$ of Tardos codes, which is the smallest value so far provable without any additional assumption.
On the other hand, after the first proposal of Tardos codes there were proposed several collusion-secure codes, e.g., \cite{BK04,CFS05,Nui09,NFHIKOW09,NHKWOFI07}, which restrict the number of pirates to $c = 2$ but achieve further short code lengths.
Such constructions of short $c$-secure codes for a small $c$ would have not only theoretical but also practical importance; for example, when the users are less anonymous for the content provider (e.g., the case of secret documents distributed in a company), it seems infeasible to make a large coalition confidentially.
The aim of this article is to extend such a \lq\lq compact'' construction to the next case $c = 3$.

For related works, we notice that there is an earlier work by Seb\'{e} and Domingo-Ferrer \cite{SD02} for $3$-secure codes.
On the other hand, there is another work by Kitagawa et al.\ \cite{KHNWI08} on construction of $3$-secure codes, in which very short code lengths are proposed but its security is evaluated only by computer experiments for some special attack strategies.

\subsection{Our Contribution}
\label{subsec:intro_contribution}

In this article, we propose a new construction of $3$-secure codes and give a theoretical security evaluation.
The codeword generation algorithm is just a bit-wise random sampling, which has been used by many preceding constructions as well.
The novel point of our construction is in the pirate tracing algorithm, which combines the use of score computation analogous to Tardos codes \cite{Tar08} with an extension of \lq\lq parent search'' technique of some preceding works against two pirates \cite{BK04,Nui09,NHKWOFI07}.
Intuitively, the score computation method works well when the parts of fingerprint in the pirated content are chosen evenly from the codewords of pirates, while the extended \lq\lq parent search'' technique works well when the fingerprint is not evenly chosen from the codewords of pirates, therefore their combination is effective.

In comparison under some parameter choices, our code lengths are approximately $3\%$ to $4\%$ of $3$-secure codes by Seb\'{e} and Domingo-Ferrer \cite{SD02}, and approximately $30\%$ to $40\%$ of $c$-secure codes by Nuida et al.\ \cite{NFHKWOI09} for $c = 3$.
This shows that our code length is even significantly shorter than the shortest known $c$-secure codes \cite{NFHKWOI09}.

In fact, Kitagawa et al.\ \cite{KHNWI08} claimed that their $3$-secure code provides almost the same security level as our code for the case of $100$ users and $128$-bit length.
However, they evaluated the security by only computer experiments for the case of some special attack algorithms (and they studied just one parameter choice as above), while in this article we give a theoretical security evaluation for arbitrary attack algorithms under the standard Marking Assumption (cf., \cite{BS98}).
(One may think that the perfect protection of so-called undetectable positions required by Marking Assumption is not practical.
However, this is in fact not a serious problem, as a general conversion technique recently proposed by Nuida \cite{Nui_preprint} can supply robustness against erasure of a bounded number of undetectable bits.)

Moreover, for the sake of improving efficiency of our tracing algorithm, we also discuss an implementation method for the algorithm.
By an intuitive observation, it seems indeed more efficient for an average case than the naive implementation.
A detailed evaluation of the proposed implementation method will be a future research topic.

\subsection{Notations}
\label{subsec:intro_notation}

In this article, $\log$ denotes the natural logarithm.
We put $[n] = \{1,2,\dots,n\}$ for an integer $n$.
Unless some ambiguity emerges, we often abbreviate a set $\{i_1,i_2,\dots,i_k\}$ to $i_1i_2 \cdots i_k$.
Let $\delta_{a,b}$ denote Kronecker delta, i.e., we have $\delta_{a,b} = 1$ if $a = b$ and $\delta_{a,b} = 0$ if $a \neq b$.
For a family $\mathcal{F}$ of sets, let $\bigcup \mathcal{F}$ and $\bigcap \mathcal{F}$ denote the union and the intersection, respectively, of all members of $\mathcal{F}$.

\subsection{Organization of the Article}
\label{subsec:intro_organization}

In Sect.~\ref{sec:preliminary}, we give a formal definition of the notion of collusion-secure fingerprint codes.
In Sect.~\ref{sec:results}, we describe our codeword generation algorithm and pirate tracing algorithm, state the main results on the security of our $3$-secure codes, and give some numerical examples for comparison to preceding works.
Section \ref{sec:proof} summarizes the outline of the security proof.
In Sect.~\ref{sec:implementation_tracing_algorithm}, we discuss an implementation issue of our tracing algorithm.
Finally, Sect.~\ref{sec:proof_detail} supplies the detail of our security proof omitted in Sect.~\ref{sec:proof}.

\section{Collusion-Secure Fingerprint Codes}
\label{sec:preliminary}

In this section, we introduce formal definitions for fingerprint codes.
Let $N$ and $m$ be positive integers, and $1 \leq c \leq N$ an integer parameter.
Put $U = [N]$.
Fix a symbol `$?$' different from `$0$' and `$1$'.
We start with the following definition:
\begin{definition}
\label{defn:fingerprint_game}
Given the parameters $N$, $m$ and $c$, we define the following game, which we refer to as \emph{pirate tracing game}.
The players of the game is a \emph{provider} and \emph{pirates}, and the game is proceeded as follows:
\begin{enumerate}
\item \emph{Provider} generates an $N \times m$ binary matrix $W = (w_{i,j})_{i \in [N], j \in [m]}$ and an element $\mathsf{st}$ called \emph{state information}.
\item \emph{Pirates} generate $U_{\mathrm{P}} \subseteq U$, $1 \leq |U_{\mathrm{P}}| \leq c$, without knowing $W$ and $\mathsf{st}$.
\item \emph{Pirates} receive the codeword $w_i = (w_{i,1},\dots,w_{i,m})$ for every $i \in U_{\mathrm{P}}$.
\item \emph{Pirates} generate a word $y = (y_1,\dots,y_m)$ on $\{0,1,?\}$ under a certain restriction specified below, and send $y$ to \emph{provider}.
\item \emph{Provider} generates $\mathsf{Acc} \subseteq U$ from $y$, $W$, and $\mathsf{st}$, without knowing $U_{\mathrm{P}}$.
\item Then \emph{pirates} win if $\mathsf{Acc} \cap U_{\mathrm{P}} = \emptyset$ or $\mathsf{Acc} \not\subseteq U_{\mathrm{P}}$, and otherwise \emph{provider} wins.
\end{enumerate}
\end{definition}
We call the word $y$ in Step 4 an \emph{attack word} and call `$?$' an \emph{erasure symbol}.
Put $U_{\mathrm{I}} = U \setminus U_{\mathrm{P}}$.
In the definition, $U$ signifies the set of all users, $U_{\mathrm{P}}$ is the coalition of pirates, and $U_{\mathrm{I}}$ is the set of innocent users.
The codeword $w_i$ signifies the fingerprint for user $i$, and the word $y$ signifies the fingerprint embedded in the pirated content.
The set $\mathsf{Acc}$ consists of the users traced by the provider from the pirated content.
The events $\mathsf{Acc} \cap U_{\mathrm{P}} = \emptyset$ and $\mathsf{Acc} \not\subseteq U_{\mathrm{P}}$ specified in Step 6 are referred to as \emph{false-negative} and \emph{false-positive} (or \emph{false-alarm}), respectively.
Both of false-negative and false-positive are called \emph{tracing error}.

Let $\mathsf{Gen}$, $\mathsf{Reg}$, $\rho$, and $\mathsf{Tr}$ denote the algorithms used in Steps 1, 2, 4, and 5, respectively.
We call $\mathsf{Gen}$, $\mathsf{Reg}$, $\rho$, and $\mathsf{Tr}$ \emph{codeword generation algorithm}, \emph{registration algorithm}, \emph{pirate strategy}, and \emph{tracing algorithm}, respectively.
We refer to the pair $\mathcal{C} = (\mathsf{Gen},\mathsf{Tr})$ as a \emph{fingerprint code}, and the following quantity
\begin{equation}
\begin{split}
Pr[ (W,\mathsf{st}) \leftarrow \mathsf{Gen}();\,
U_{\mathrm{P}} \leftarrow \mathsf{Reg}();\,
y \leftarrow \rho(U_{\mathrm{P}},(w_i)_{i \in U_{\mathrm{P}}}); \\
\mathsf{Acc} \leftarrow \mathsf{Tr}(y,W,\mathsf{st}):\,
\mathsf{Acc} \cap U_{\mathrm{P}} = \emptyset \mbox{ or } \mathsf{Acc} \not\subseteq U_{\mathrm{P}}]
\end{split}
\end{equation}
(i.e., the overall probability that pirates win) is called an \emph{error probability} of $\mathcal{C}$.

We specify the restriction for $y$ mentioned in Step 4.
First we present some terminology.
For $j \in [m]$, $j$-th column in codewords is called \emph{undetectable} if $j$-th bits $w_{i,j}$ of the codewords $w_i$ of pirates $i \in U_{\mathrm{P}}$ coincide with each other; otherwise the column is called \emph{detectable}.
Then, in this article, we put the following standard assumption called \emph{Marking Assumption} \cite{BS98}:
\begin{definition}
\label{defn:MA}
The \emph{Marking Assumption} states the following: For the attack word $y$, for every undetectable column $j$, we have $y_j = w_{i,j}$ for some (or equivalently, all) $i \in U_{\mathrm{P}}$.
\end{definition}

We say that a fingerprint code $\mathcal{C}$ is \emph{collusion-secure} if the error probability of $\mathcal{C}$ is sufficiently small for any $\mathsf{Reg}$ and $\rho$ under Marking Assumption.
More precisely, we say that $\mathcal{C}$ is \emph{$c$-secure} (\emph{with $\varepsilon$-error}) \cite{BS98} if the error probability is not higher than a sufficiently small value $\varepsilon$ under Marking Assumption.

\section{Our $3$-Secure Codes}
\label{sec:results}

Here we propose a codeword generation algorithm $\mathsf{Gen}$ and a tracing algorithm $\mathsf{Tr}$ for $3$-secure codes ($c = 3$).
The security property will be discussed below.

The algorithm $\mathsf{Gen}$, with parameter $1/2 \leq p < 1$, is the codeword generation algorithm of Tardos codes \cite{Tar08} but the probability distribution of biases is different: For each (say, $j$-th) column, each user's bit $w_{i,j}$ is independently chosen by $Pr[ w_{i,j} = 1 ] = p_j$, where $p_j = p$ or $1 - p$ with probability $1/2$ each.
Then $\mathsf{Gen}$ outputs $W = ({w_{i,j})_{i \in [N}, j \in [m]}$ and $\mathsf{st} = (p_j)_{j \in [m]}$.

To describe the algorithm $\mathsf{Tr}$, we introduce some notations.
For binary words $w^{(1)},\dots,\allowbreak w^{(k)}$ of length $m$, we define
\begin{equation}
\mathcal{E}(w^{(1)},\dots,w^{(k)}) = \{y \in \{0,1\}^m \mid y_j \in \{w^{(1)}_j,\dots,w^{(k)}_j\} \mbox{ for every } j \in [m]\} \enspace,
\end{equation}
the \emph{envelope} of $w^{(1)},\dots,w^{(k)}$.
Then for a binary word $y$ of length $m$ and a collection $W = (w_{i,j})$ of codewords of users, we define
\begin{equation}
\label{eq:definition_parent_triple}
\mathcal{T}(y) = \{i_1i_2i_3 \subseteq U \mid i_1 \neq i_2 \neq i_3 \neq i_1, \,y \in \mathcal{E}(w_{i_1},w_{i_2},w_{i_3}) \}
\end{equation}
(see Sect.~\ref{subsec:intro_notation} for the notation $i_1i_2i_3$).
A key property implied by Marking Assumption is that if the attack word $y$ contains no erasure symbols, then $y$ belongs to the envelope of the codewords of pirates and, if furthermore $|U_{\mathrm{P}}| = 3$, the family $\mathcal{T}(y)$ contains the set of three pirates.
By using these notations, we define the algorithm $\mathsf{Tr}$ as follows, where the words $y$, $w_1,\dots,w_N$ and the state information $\mathsf{st} = (p_j)_{j \in [m]}$ are given:
\begin{enumerate}
\item \label{item:tracing_replace_?}
Replace each erasure symbol \lq $?$' in $y$ with \lq $0$' or \lq $1$' independently in the following manner.
If $y_j = {?}$, then it is replaced with \lq $1$' with probability $p_j$, and with \lq $0$' with probability $1-p_j$.
Let $y'$ denote the resulting word.
\item \label{item:tracing_threshold}
Calculate a threshold parameter $Z = Z_{y'}$ as specified below.
\item \label{item:tracing_score_compute}
For each $i \in U$, calculate the score $S(i)$ of $i$ by
\begin{equation}
\label{eq:scoring_function}
S(i) = \sum_{\substack{j \in [m] \\ y'_j = 1}} \delta_{w_{i,j},y'_j} \log\frac{1}{p_j} + \sum_{\substack{j \in [m] \\ y'_j = 0}} \delta_{w_{i,j},y'_j} \log\frac{1}{1-p_j} \enspace.
\end{equation}
\item \label{item:tracing_score_check}
If $S(i) \geq Z$ for some $i \in U$, then output every $i \in U$ such that $S(i) \geq Z$, and halt.
\item \label{item:tracing_triple_compute}
Calculate $\mathcal{T}' = \{T \in \mathcal{T}(y') \mid T \cap T' \neq \emptyset \mbox{ for every }\allowbreak T' \in \mathcal{T}(y')\}$.
If $\mathcal{T}' = \emptyset$, then output nobody, and halt.
\item \label{item:tracing_triple_intersection}
If $\bigcap \mathcal{T}' \neq \emptyset$, then output every member of $\bigcap \mathcal{T}'$, and halt.
\item \label{item:tracing_pair_compute}
Calculate $\mathcal{P} = \{P = i_1i_2 \subseteq U \mid i_1 \neq i_2, P \cap T \neq \emptyset \mbox{ for every } T \in \mathcal{T}'\}$.
Let $\mathcal{P}_k$ be the set of all $i \in U$ such that $|\{P \in \mathcal{P} \mid i \in P\}| = k$.
\item \label{item:tracing_pair_multiplicity_1}
If $\mathcal{P}_1 \neq \emptyset$, then output every $i \in U$ such that $i i' \in \mathcal{P}$ for some $i' \in \mathcal{P}_1$, and halt.
\item \label{item:tracing_pair_7}
If $|\mathcal{P}| = 7$, then output every $i \in U$ such that $i i' \in \mathcal{P}$ for some $i' \in \mathcal{P}_2$, and halt.
\item \label{item:tracing_pair_6}
If $|\mathcal{P}| = 6$, then output every $i \in \mathcal{P}_3$, and halt.
\item \label{item:tracing_pair_5_nonempty}
If $|\mathcal{P}| = 5$ and $\mathcal{T}'' = \{i_1i_2i_3 \in \mathcal{T}' \mid i_1i_2,i_2i_3,i_1i_3 \in \mathcal{P}\} \neq \emptyset$, then output every member of $\mathcal{P}_2 \cap (\bigcup \mathcal{T}'')$, and halt.
\item \label{item:tracing_pair_5_empty}
If $|\mathcal{P}| = 5$ and $\mathcal{T}'' = \emptyset$, then output every $i \in \bigcup \mathcal{P}$ such that $i i' \not\in \mathcal{P}$ for some $i' \in \bigcup \mathcal{P}$, and halt.
\item \label{item:tracing_pair_4}
If $|\mathcal{P}| = 4$, then output every $i \in \bigcup \mathcal{P}$ such that $T \in \mathcal{T}'$ and $T \subseteq \bigcup \mathcal{P}$ imply $i \in T$, and halt.
\item \label{item:tracing_pair_3}
If $|\mathcal{P}| = 3$, then output every $i \in \bigcup \mathcal{P}$, and halt.
\item \label{item:tracing_pair_otherwise}
Output nobody, and halt.
\end{enumerate}
This algorithm is divided into two parts; Steps \ref{item:tracing_replace_?}--\ref{item:tracing_score_check} and the remaining steps.
The former part aims at performing coarse tracing to defy \lq\lq unbalanced'' pirate strategies; namely, if some pirates' codewords contribute to generate $y$ at too many columns than the other pirates, then it is very likely that scores of such pirates exceed the threshold and they are correctly accused by Step \ref{item:tracing_score_check}.

On the other hand, the latter complicated part aims at performing more refined tracing.
First, the algorithm enumerates the collections of three users such that $y'$ can be made (under Marking Assumption) from their codewords, in other words, the collection is a candidate of the actual triple of pirates.
Steps \ref{item:tracing_triple_compute} and \ref{item:tracing_triple_intersection} are designed according to an intuition that a pirate would be very likely to be contained in much more candidate triples than an innocent user.
When the tracing algorithm did not halt until Step \ref{item:tracing_triple_intersection}, the possibilities of \lq\lq structures'' of the set $\mathcal{T}'$ are mostly limited, even allowing us to enumerate all the possibilities.
However, it is space-consuming to enumerate them and determine suitable outputs in a case-by-case manner.
Instead, we give an explicit algorithm (Steps \ref{item:tracing_pair_compute}--\ref{item:tracing_pair_otherwise}) to determine a suitable output, which is artificial but not too space-consuming.
Some examples of the possibilities of $\mathcal{T}'$ are given in Fig.\ \ref{fig:example_latter_part}, where $1$, $2$, $3$ are the pirates, $i_j$ are innocent users and the members of $\mathcal{T}'$ are denoted by triangles.
\begin{figure}[htb]
\centering
\begin{picture}(100,150)(15,-160)
\put(60,-30){\circle{12}}\put(60,-32){\hbox to0pt{\hss$1$\hss}}
\put(40,-60){\circle{12}}\put(40,-62){\hbox to0pt{\hss$2$\hss}}
\put(80,-60){\circle{12}}\put(80,-62){\hbox to0pt{\hss$3$\hss}}
\put(20,-30){\circle{12}}\put(20,-32){\hbox to0pt{\hss$i_1$\hss}}
\put(100,-30){\circle{12}}\put(100,-32){\hbox to0pt{\hss$i_2$\hss}}
\put(40,-90){\circle{12}}\put(40,-92){\hbox to0pt{\hss$i_3$\hss}}
\put(80,-90){\circle{12}}\put(80,-92){\hbox to0pt{\hss$i_4$\hss}}
\put(26,-30){\line(1,0){28}}\put(26,-30){\line(3,-5){14}}\put(54,-30){\line(-3,-5){14}}
\put(66,-30){\line(1,0){28}}\put(66,-30){\line(3,-5){14}}\put(94,-30){\line(-3,-5){14}}
\put(46,-60){\line(1,0){28}}\put(46,-60){\line(3,5){14}}\put(74,-60){\line(-3,5){14}}
\put(41,-66){\line(1,0){38}}\put(41,-66){\line(0,-1){19}}\put(79,-66){\line(-2,-1){38}}
\put(45,-63){\line(1,0){29}}\put(45,-63){\line(1,-1){29}}\put(74,-63){\line(0,-1){29}}
\put(10,-110){$\mathcal{P} = \{12,13,23,3i_1,2i_2\}$}
\put(10,-125){$\mathcal{P}_1 = \{i_1,i_2\}$}
\put(10,-140){output = $2,3$}
\end{picture}
\begin{picture}(15,130)(0,-140)
\multiput(0,0)(0,-20){7}{\line(0,-1){10}}
\end{picture}
\begin{picture}(100,80)(10,-160)
\put(20,-30){\circle{12}}\put(20,-32){\hbox to0pt{\hss$1$\hss}}
\put(60,-30){\circle{12}}\put(60,-32){\hbox to0pt{\hss$3$\hss}}
\put(20,-60){\circle{12}}\put(20,-62){\hbox to0pt{\hss$2$\hss}}
\put(60,-60){\circle{12}}\put(60,-62){\hbox to0pt{\hss$i_1$\hss}}
\put(100,-60){\circle{12}}\put(100,-62){\hbox to0pt{\hss$i_2$\hss}}
\put(26,-30){\line(1,0){28}}\put(26,-30){\line(0,-1){28}}\put(54,-30){\line(-1,-1){28}}
\put(21,-54){\line(1,0){38}}\put(21,-54){\line(0,1){19}}\put(59,-54){\line(-2,1){38}}
\put(66,-60){\line(1,0){28}}\put(66,-60){\line(0,1){28}}\put(94,-60){\line(-1,1){28}}
\put(0,-95){$\mathcal{P} =$}
\put(0,-110){$\{13,1i_1,1i_2,23,2i_1,2i_2,3i_1\}$}
\put(0,-125){$\mathcal{P}_1 = \emptyset$, $\mathcal{P}_2 = \{i_2\}$}
\put(0,-140){output = $1,2$}
\end{picture}
\begin{picture}(20,130)(0,-140)
\multiput(5,0)(0,-20){7}{\line(0,-1){10}}
\end{picture}
\begin{picture}(100,80)(20,-160)
\put(20,-30){\circle{12}}\put(20,-32){\hbox to0pt{\hss$1$\hss}}
\put(60,-30){\circle{12}}\put(60,-32){\hbox to0pt{\hss$3$\hss}}
\put(100,-30){\circle{12}}\put(100,-32){\hbox to0pt{\hss$i_2$\hss}}
\put(20,-60){\circle{12}}\put(20,-62){\hbox to0pt{\hss$2$\hss}}
\put(60,-60){\circle{12}}\put(60,-62){\hbox to0pt{\hss$i_1$\hss}}
\put(100,-60){\circle{12}}\put(100,-62){\hbox to0pt{\hss$i_3$\hss}}
\put(26,-30){\line(1,0){28}}\put(26,-30){\line(0,-1){28}}\put(54,-30){\line(-1,-1){28}}
\put(21,-54){\line(1,0){38}}\put(21,-54){\line(0,1){19}}\put(59,-54){\line(-2,1){38}}
\put(66,-60){\line(1,0){28}}\put(66,-60){\line(0,1){28}}\put(94,-60){\line(-1,1){28}}
\put(61,-36){\line(1,0){38}}\put(61,-36){\line(0,-1){19}}\put(100,-36){\line(-2,-1){38}}
\put(10,-95){$\mathcal{P} = \{13,1i_1,23,2i_1,3i_1\}$}
\put(10,-110){$\mathcal{P}_1 = \emptyset$, $\mathcal{T}'' = \emptyset$}
\put(10,-125){$\bigcup\mathcal{P} = \{1,2,3,i_1\}$, $12 \not\in \mathcal{P}$}
\put(10,-140){output = $1,2$}
\end{picture}
\caption{Examples of the sets $\mathcal{T}'$ and $\mathcal{P}$}
\label{fig:example_latter_part}
\end{figure}
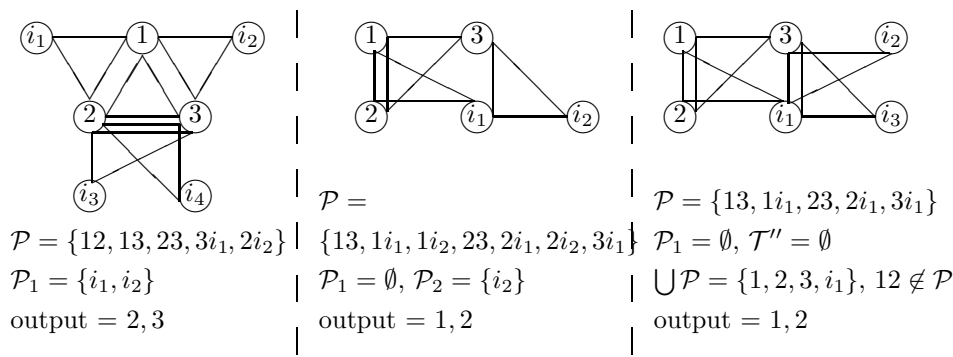

For the latter part, the tracing tends to fail in the case that the set $\mathcal{T}(y')$ contains much more members other than the triple of the pirates, which tends to occur when the contributions of the pirates' codewords to $y$ was too unbalanced.
However, such an unbalanced attack is defied by the former part, therefore the latter part also works well.
More precisely, an upper bound of the error probability at the latter part will be derived by using the property that scores of pirates are lower than the threshold (as otherwise the tracing halts at the former part); cf., Sect.~\ref{subsec:proof_probability_type_IV}.
Our scoring function (\ref{eq:scoring_function}), which is different from the ones for Tardos codes \cite{Tar08} and its symmetrized version \cite{SKC08}, is adopted to simplify the derivation process.
Although it is possible that the true error probability is reduced by applying the preceding scoring functions, a proof of a bound of error probability with those scoring functions requires another evaluation technique and would be much more involved, which is a future research topic.

Note that, for the case $p = 1/2$, it is known that the \lq\lq minority vote'' by three pirates for generating $y$ cancels the mutual information between $y$ and a single codeword, therefore the pirates are likely to escape from the former part of $\mathsf{Tr}$.
However, even by such a strategy the pirates are unlikely to escape from the latter part of $\mathsf{Tr}$, as \emph{collections} of users rather than individual users are considered there.

The threshold parameter $Z = Z_{y'}$ in Step \ref{item:tracing_threshold} is determined as follows.
Let $A_{\mathrm{H}}$ be the set of column indices $j$ such that $(p_j,y'_j) = (p,1)$ or $(1-p,0)$, i.e., the occurrence probability of the bit $y'_j \in \{0,1\}$ at $j$-th column is $p \geq 1/2$, and let $A_{\mathrm{L}} = [m] \setminus A_{\mathrm{H}}$.
Put $a_{\mathrm{H}} = |A_{\mathrm{H}}|$ and $a_{\mathrm{L}} = |A_{\mathrm{L}}|$.
Choose a parameter $\varepsilon_0 > 0$ which is smaller than the desired bound $\varepsilon$ of error probability.
Then choose $Z = Z_{y'}$ satisfying the following condition:
\begin{equation}
\label{eq:threshold}
\sum_{k_{\mathrm{H}},k_{\mathrm{L}}} \binom{a_{\mathrm{L}}}{k_{\mathrm{L}}} p^{a_{\mathrm{L}} - k_{\mathrm{L}}} (1-p)^{k_{\mathrm{L}}} \binom{a_{\mathrm{H}}}{k_{\mathrm{H}}} p^{k_{\mathrm{H}}} (1-p)^{a_{\mathrm{H}} - k_{\mathrm{H}}} \leq \frac{ \varepsilon_0 }{ N } \enspace,
\end{equation}
where the sum runs over all integers $k_{\mathrm{H}},k_{\mathrm{L}} \geq 0$ such that $k_{\mathrm{H}} \log\frac{1}{\,p\,} + k_{\mathrm{L}} \log\frac{1}{1-p} \geq Z$.
An example of a concrete choice of $Z$ satisfying the condition (\ref{eq:threshold}) is as follows:
\begin{equation}
\label{eq:threshold_example}
Z_0 = a_{\mathrm{H}} p \log\frac{1}{\,p\,} + a_{\mathrm{L}} (1-p)\log\frac{1}{1-p}
+ \sqrt{ \frac{1}{\,2\,} \left( \left( \log\frac{1}{\,p\,} \right)^2 a_{\mathrm{H}} + \left( \log\frac{1}{1-p} \right)^2 a_{\mathrm{L}} \right) \log \frac{N}{\varepsilon_0} }
\end{equation}
(see Sect.~\ref{subsec:proof_threshold} for the proof).
From now, we suppose that the threshold $Z$ satisfies the condition (\ref{eq:threshold}) and $Z \leq Z_0$.

For the security of the proposed fingerprint code, first we present the following result, which will be proven in Sect.~\ref{sec:proof}:
\begin{theorem}
\label{thm:error_probability}
By the above choice of $\varepsilon_0$ and $Z$, if the number of pirates is three, then the error probability of the proposed fingerprint code is lower than
\begin{equation}
\varepsilon_0
+ \binom{N-3}{3} f_1(p)^m
+ 3(N-3)(N-4) f_2(p)^m
+ (N-3) (1-p)^{-3 \sqrt{ (m/2) \log(N/\varepsilon_0) } } f_3(p)^m \enspace,
\end{equation}
where we put
\begin{equation}
\label{eq:thm_error_probability_statement_notation}
\begin{split}
f_1(p) &= 1 - 3p^2 + 10p^3 - 15p^4 + 12p^5 - 4p^6 \enspace, \\
f_2(p) &= p^2(1-p)^2(\sqrt{p} + \sqrt{1-p}) + 1 - p - p^2 + 4p^3 - 2p^4 \enspace, \\
f_3(p) &= p^{4-3p}(p^2 - 3p + 3) + (1-p)^{3p+1}(p^2 + p + 1) \enspace.
\end{split}
\end{equation}
\end{theorem}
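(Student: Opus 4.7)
The plan is to split the total error probability along the two halves of $\mathsf{Tr}$. The scoring half (Steps 1--4) can only produce a tracing error by accusing an innocent user $I$, which happens exactly when $S(I) \geq Z$. Conditional on $y'$, for a fixed innocent user $I$ the indicator $\delta_{w_{I,j},y'_j}$ is Bernoulli with success probability $p$ on columns $j \in A_{\mathrm{H}}$ and $1-p$ on $j \in A_{\mathrm{L}}$, contributing $\log(1/p)$ or $\log(1/(1-p))$ respectively to $S(I)$. The left-hand side of (\ref{eq:threshold}) is therefore exactly $\Pr[S(I) \geq Z \mid y']$, so a union bound over the $N-3$ innocent users yields the first summand $\varepsilon_0$. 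For the remainder of the analysis I condition on Step 4 not halting, so in particular every pirate's score satisfies $S(P_k) < Z \leq Z_0$.

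For the combinatorial part (Steps 5--15), I would first observe that under Marking Assumption with $|U_{\mathrm{P}}|=3$ the pirate triple $\{P_1,P_2,P_3\}$ always lies in $\mathcal{T}(y')$: on undetectable columns $y_j$ must equal the common pirate bit, and on detectable columns the random replacement of Step~\ref{item:tracing_replace_?} lands on a value possessed by one of the three pirates. The core of the proof is then a case analysis verifying that, \emph{unless} $\mathcal{T}(y')$ contains some triple $T$ with at least one innocent user, each branch of Steps 5--13 outputs a nonempty subset of $\{P_1,P_2,P_3\}$; the artificial-looking branches and the examples in Fig.~\ref{fig:example_latter_part} are precisely crafted to make this verification branch-by-branch. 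This reduces the combinatorial-part error to a union bound over the event that, for some triple $T$ containing at least one innocent user, one has $y' \in \mathcal{E}(\{w_i : i \in T\})$ (together with the pirate score constraint when needed).

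Granting that reduction, I would classify $T$ by the number of pirates it contains. When $T$ consists of three innocents, the worst-case per-column probability that $y'_j$ lies in the envelope of three i.i.d.\ bits of bias $p_j$ (maximized over the adversarial pirate strategy and over the coin flip determining $p_j$) evaluates to $f_1(p)$, and multiplying by the $\binom{N-3}{3}$ choices of $T$ gives the second summand. When $T$ has one pirate and two innocents, a more delicate per-column maximization (whose $\sqrt{p}+\sqrt{1-p}$ term is what makes $f_2$ arise from optimizing over the adversary's mixing weights on the column where the two innocents' bits differ) yields $f_2(p)$ multiplied by the $3(N-3)(N-4)$ configurations. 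Finally, for $T$ with two pirates and one innocent, the direct envelope-probability bound is too weak, so I would additionally invoke the bound $S(P_k) < Z_0$ for each of the three pirates via an exponential change of measure; one such Chernoff-style factor per pirate accumulates to the slack $(1-p)^{-3\sqrt{(m/2)\log(N/\varepsilon_0)}}$ (tracked to the square-root term of (\ref{eq:threshold_example})), which, combined with the per-column factor $f_3(p)$ and the $N-3$ choices of innocent, produces the fourth summand. The main obstacle will be the combinatorial case analysis in the reduction, which must be done branch-by-branch through Steps 5--13 and for every possible incidence pattern of the pirate triple with the rest of $\mathcal{T}(y')$; a secondary challenge is the precise worst-case per-column optimization yielding the exact polynomials in (\ref{eq:thm_error_probability_statement_notation}) and the exponential change of measure underlying the $(1-p)^{-3\sqrt{(m/2)\log(N/\varepsilon_0)}}$ factor.
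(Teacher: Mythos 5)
Your treatment of the scoring half is correct and matches the paper: the left-hand side of (\ref{eq:threshold}) is exactly $Pr[S(\mathsf{I})\geq Z\mid y']$ for a fixed innocent $\mathsf{I}$, and a union bound gives the $\varepsilon_0$ term. The gap is in your reduction for the combinatorial half. You reduce the error of Steps \ref{item:tracing_triple_compute}--\ref{item:tracing_pair_otherwise} to a union bound over the event that \emph{some single} triple $T$ containing at least one innocent lies in $\mathcal{T}(y')$. That event is not a valid proxy for tracing error, and bounding it cannot yield the stated theorem. For a triple with two pirates and one innocent, say $12\mathsf{I}$, membership in $\mathcal{T}(y')$ only requires $\mathsf{I}$ to match $y'$ on the columns covered by neither pirate $1$ nor pirate $2$; under majority vote there are no such columns, so $12\mathsf{I}\in\mathcal{T}(y')$ for \emph{every} innocent $\mathsf{I}$ with probability one, and even restricting to strategies surviving Step \ref{item:tracing_score_check} the per-triple probability is far from $f_3(p)^m$. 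For a triple with one pirate and two innocents, the per-column success probability at $p=1/2$ is $3/4+1/4\cdot 3/4=15/16$, much larger than $f_2(1/2)=(10+\sqrt{2})/16$, so that branch of your union bound is also numerically vacuous for the code lengths in the tables. The algorithm is specifically designed so that the mere presence of innocent-containing triples in $\mathcal{T}(y')$ does \emph{not} cause an error; Steps \ref{item:tracing_pair_compute}--\ref{item:tracing_pair_otherwise} still extract pirates in those situations.

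What is actually needed, and what the paper proves (Proposition \ref{prop:error_type}), is that error in the combinatorial half forces one of three much more restrictive \emph{conjunctive} configurations: (II) a triple of three innocents in $\mathcal{T}(y')$; (III) \emph{two} triples $T_1,T_2\in\mathcal{T}(y')$, each meeting $U_{\mathrm{P}}$ in exactly one (distinct) pirate, whose intersection is a nonempty set of innocents; or (IV) a \emph{single} innocent $\mathsf{I}$ with all three of $12\mathsf{I},13\mathsf{I},23\mathsf{I}$ in $\mathcal{T}(y')$ simultaneously, together with all pirate scores below $Z$. Your second summand does coincide with the paper's Type II bound, but the other two events differ from yours in an essential way: $f_2(p)$ arises from $\min(a,b)\leq\sqrt{ab}$ applied to the probabilities of the two Type III triples sharing the innocent $\mathsf{I}_0$ (that is where $\sqrt{p}+\sqrt{1-p}$ comes from, not from an adversarial mixing weight on a single triple), and $f_3(p)$ comes from using $S(1)+S(2)+S(3)<3Z_0$ to lower-bound the number of minority columns on which $\mathsf{I}$ must simultaneously agree with $y'$ for all three triples $12\mathsf{I},13\mathsf{I},23\mathsf{I}$. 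In each case one must also prove (as the paper does via separate exchange arguments) that the majority vote attack maximizes the relevant probability. Without identifying these exact bad events, the case analysis you defer to cannot be closed and the claimed bound does not follow.
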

Some numerical analysis suggests that the choice $p = 1/2$ would be optimal (or at least pretty good) to decrease the bound of error probabilities specified in Theorem \ref{thm:error_probability}.
In fact, an elementary analysis shows that the second term $\binom{N-3}{3} f_1(p)^m$ in the sum, which seems dominant (cf., Theorem \ref{thm:error_probability_p_1/2} below), takes the minimum over $p \in [1/2,1)$ at $p = 1/2$.
Hence we use $p = 1/2$ in the following argument.
Now it is shown that the error probability against less than three pirates also has the same bound under a condition (\ref{eq:error_probability_condition_m}) below (which seems trivial in practical situations), therefore we have the following (which will be proven in Sect.~\ref{sec:proof}):
\begin{theorem}
\label{thm:error_probability_p_1/2}
By using the value $p = 1/2$, the proposed fingerprint code is $3$-secure with error probability lower than
\begin{equation}
\varepsilon_0
+ \binom{N-3}{3} \left( \frac{7}{\,8\,} \right)^m
+ 3(N-3)(N-4) \left( \frac{ 10 + \sqrt{2} }{ 16 } \right)^m
+ (N-3) 8^{\sqrt{ (m/2) \log(N/\varepsilon_0) } } \left( \frac{ 7 \sqrt{2} }{ 16 } \right)^m
\end{equation}
provided
\begin{equation}
\label{eq:error_probability_condition_m}
m \geq 8 \log \frac{N}{\varepsilon_0} \left( 1 + \frac{ 1 }{ 16 \log(N/\varepsilon_0) } \right)^2 \enspace.
\end{equation}
\end{theorem}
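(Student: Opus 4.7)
The proof splits naturally into two parts: specialize the bound of Theorem~\ref{thm:error_probability} to $p = 1/2$ (which handles the case $|U_{\mathrm{P}}| = 3$ directly), and then extend the same bound to the coalitions $|U_{\mathrm{P}}| \in \{1,2\}$ under the additional assumption (\ref{eq:error_probability_condition_m}) on~$m$.

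For the specialization, I plan to evaluate $f_1, f_2, f_3$ at $p = 1/2$ by a direct calculation: one obtains $f_1(1/2) = 7/8$, $f_2(1/2) = (10 + \sqrt{2})/16$, and $f_3(1/2) = 7\sqrt{2}/16$, while the prefactor simplifies as $(1-p)^{-3\sqrt{(m/2)\log(N/\varepsilon_0)}} = 2^{3\sqrt{(m/2)\log(N/\varepsilon_0)}} = 8^{\sqrt{(m/2)\log(N/\varepsilon_0)}}$. Substituting these values into Theorem~\ref{thm:error_probability} immediately reproduces the right-hand side claimed here for the subcase $|U_{\mathrm{P}}| = 3$.

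For $|U_{\mathrm{P}}| \in \{1,2\}$, I would argue that the score-based part of the tracing (Steps \ref{item:tracing_replace_?}--\ref{item:tracing_score_check}) alone already succeeds with total error at most $\varepsilon_0$. The false-positive contribution is immediate: condition (\ref{eq:threshold}) combined with a union bound over the $N$ users bounds the probability that some innocent user has score at least $Z$ by $\varepsilon_0$, independently of $|U_{\mathrm{P}}|$. For the false-negative contribution, Marking Assumption forces $y'$ to agree with a pirate's codeword at every undetectable column, so if $|U_{\mathrm{P}}| = 1$ the sole pirate's score is essentially $m\log 2$, while if $|U_{\mathrm{P}}| = 2$ the adversary's optimal strategy at $p = 1/2$ is to split the detectable columns as evenly as possible, yielding a maximum pirate score of $(m - |D|/2)\log 2$, where $|D|$ is the (binomially distributed) number of detectable columns.

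The role of hypothesis (\ref{eq:error_probability_condition_m}) is precisely to guarantee, via a Hoeffding bound on $|D|$, that this maximum pirate score exceeds the threshold $Z_0$ from (\ref{eq:threshold_example}) except with probability at most $\varepsilon_0$; unwinding the concentration inequality produces the stated quadratic form of the condition on~$m$. Once both contributions for $|U_{\mathrm{P}}| \le 2$ are controlled, the overall error bound is dominated by the $|U_{\mathrm{P}}| = 3$ expression already obtained. The main obstacle I anticipate is the two-pirate case: one must control the adversary's choice of column split uniformly rather than only on average, and must track the concentration constants carefully enough to recover precisely the factor $8$ (together with the $(1 + 1/(16\log(N/\varepsilon_0)))^2$ correction) appearing in (\ref{eq:error_probability_condition_m}).
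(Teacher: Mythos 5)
Your proposal follows essentially the same route as the paper: specialize Theorem~\ref{thm:error_probability} at $p=1/2$ (your values $f_1(1/2)=7/8$, $f_2(1/2)=(10+\sqrt{2})/16$, $f_3(1/2)=7\sqrt{2}/16$ and the prefactor $8^{\sqrt{(m/2)\log(N/\varepsilon_0)}}$ all check out), then dispose of coalitions of size one or two by showing the score-threshold step alone succeeds, using exactly the paper's observation that the two pirates' scores sum to $(2a_{\mathrm{u}}+a_{\mathrm{d}})\log 2$ (equivalently, the larger score is at least $(m-a_{\mathrm{d}}/2)\log 2$) together with a Hoeffding bound on the number of detectable columns calibrated by (\ref{eq:error_probability_condition_m}). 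The only bookkeeping caveat is that to land on $\varepsilon_0$ rather than $2\varepsilon_0$ in the small-coalition cases you should push the false-negative probability down to $\varepsilon_0/N$ (which is what the condition on $m$ actually yields) so that it combines with the $(N-1)\varepsilon_0/N$ false-positive bound, precisely as in the paper.
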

Note that when $p = 1/2$, the score $S(i)$ of a user $i$ is equal to $\log 2$ times the number of columns in which the words $w_i$ and $y'$ coincide.
Hence the calculation of scores can be made easier by using the \lq\lq normalized'' score $\widetilde{S}(i) = S(i) / \log 2$ instead, which is equal to $m$ minus the Hamming distance of $w_i$ from $y'$, together with the \lq\lq normalized'' threshold $Z_0 / \log 2 = m/2 + \sqrt{(m/2) \log(N/\varepsilon_0)}$.

Table \ref{tab:comparison_SD02} shows comparison of our code lengths (numerically calculated by using Theorem \ref{thm:error_probability_p_1/2}) with $3$-secure codes by Seb\'{e} and Domingo-Ferrer \cite{SD02}.
Table \ref{tab:comparison_Nui09} shows the comparison with $c$-secure codes by Nuida et al.\ \cite{NFHKWOI09} for $c = 3$.
The values of $N$ and $\varepsilon$ and the corresponding code lengths are chosen from those articles.
The tables show that our code lengths are much shorter than the codes in \cite{SD02}, and even significantly shorter than the codes in \cite{NFHKWOI09} which are in fact the shortest $c$-secure codes known so far (improving the celebrated Tardos codes \cite{Tar08}).
On the other hand, recently Kitagawa et al.\ \cite{KHNWI08} proposed another construction of $3$-secure codes, and evaluated the security against some typical pirate strategies in the case $N = 100$ and $m = 128$ by computer experiment.
The resulting error probability was $\varepsilon = 0.009$.
For the same error probability, our code length (with parameter $\varepsilon_0 = \varepsilon/2$) is $m = 135$.
Therefore our code, which is \emph{provably secure} in contrast to their code, has almost the same length as their code.
\begin{table}[htb]
\centering
\caption{Comparison of code lengths with the codes by Seb\'{e} and Domingo-Ferrer \cite{SD02}}
\label{tab:comparison_SD02}
\begin{tabular}{c||c|c|c}
$N$ & $128$ & $256$ & $512$ \\ 
$\varepsilon$ & $0.14 \times 10^{-6}$ & $0.15 \times 10^{-13}$ & $0.19 \times 10^{-27}$ \\ \hline
\cite{SD02} & $6985$ & $14025$ & $28105$ \\ \hline
Our code & $282$ & $502$ & $934$ \\
($\varepsilon_0 =$) & $(1/2)\varepsilon$ & $(7/10)\varepsilon$ & $(7/10)\varepsilon$ \\ \hline
ratio & $4.04\%$ & $3.58\%$ & $3.32\%$ \\
\end{tabular}
\end{table}
\begin{table}[htb]
\centering
\caption{Comparison of code lengths with the codes by Nuida et al.\ \cite{NFHKWOI09} ($c = 3$)}
\label{tab:comparison_Nui09}
\begin{tabular}{c||c|c|c}
$N$ & $300$ & $10^9$ & $10^6$ \\
$\varepsilon$ & $10^{-11}$ & $10^{-6}$ & $10^{-3}$ \\ \hline
\cite{NFHKWOI09} & $1309$ & $1423$ & $877$ \\ \hline
Our code & $420$ & $556$ & $349$ \\
($\varepsilon_0 =$) & $(9/10)\varepsilon$ & $(1/100)\varepsilon$ & $(1/100)\varepsilon$ \\ \hline
ratio & $32.1\%$ & $39.1\%$ & $39.8\%$ \\
\end{tabular}
\end{table}

\section{Security Proof}
\label{sec:proof}

In this section, we present an outline of the proof of Theorems \ref{thm:error_probability} and \ref{thm:error_probability_p_1/2}.
Omitted details of the proof will be supplied in Sect.~\ref{sec:proof_detail}.

First, we present some properties of the threshold parameter $Z = Z_{y'}$, which will be proven in Sect.~\ref{subsec:proof_threshold}:
\begin{proposition}
\label{prop:threshold}
\begin{enumerate}
\item If $Z$ satisfies the condition (\ref{eq:threshold}), then the conditional probability that $S(\mathsf{I}) \geq Z$ for some $\mathsf{I} \in U_{\mathrm{I}}$, conditioned on the choice of $y'$, is not higher than $(N-1)\varepsilon_0 / N$.
\item The value $Z = Z_0$ in (\ref{eq:threshold_example}) satisfies the condition (\ref{eq:threshold}).
\end{enumerate}
\end{proposition}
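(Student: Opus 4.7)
The plan is to identify the double sum on the left of (\ref{eq:threshold}) with the conditional per-user score-exceedance probability, after which part 1 reduces to a union bound and part 2 reduces to Hoeffding's inequality.

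Fix $\mathsf{st} = (p_j)_{j \in [m]}$ and $y'$, so that $A_{\mathrm{H}}, A_{\mathrm{L}}, a_{\mathrm{H}}, a_{\mathrm{L}}$ are determined. For any innocent user $\mathsf{I} \in U_{\mathrm{I}}$, the codeword $w_{\mathsf{I}}$ is generated bit-independently and is independent of the pirates' codewords, of the pirate strategy, and of the random \lq $?$'-replacement in Step \ref{item:tracing_replace_?}; hence conditionally on $(\mathsf{st}, y')$ the bits $w_{\mathsf{I},j}$ are still independent with $\Pr[w_{\mathsf{I},j} = 1] = p_j$. From (\ref{eq:scoring_function}) and the definition of $A_{\mathrm{H}}$, for each $j \in A_{\mathrm{H}}$ the indicator $\delta_{w_{\mathsf{I},j}, y'_j}$ equals $1$ with probability $p$ and contributes $\log(1/p)$ to $S(\mathsf{I})$, while for each $j \in A_{\mathrm{L}}$ it equals $1$ with probability $1-p$ and contributes $\log(1/(1-p))$. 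Thus $S(\mathsf{I}) = K_{\mathrm{H}} \log(1/p) + K_{\mathrm{L}} \log(1/(1-p))$ with independent $K_{\mathrm{H}} \sim \mathrm{Bin}(a_{\mathrm{H}}, p)$ and $K_{\mathrm{L}} \sim \mathrm{Bin}(a_{\mathrm{L}}, 1-p)$, and the left side of (\ref{eq:threshold}) is exactly $\Pr[S(\mathsf{I}) \geq Z \mid \mathsf{st}, y']$. Assertion 1 follows: the hypothesis forces this per-user probability to be at most $\varepsilon_0/N$, and a union bound over the $|U_{\mathrm{I}}| \leq N-1$ innocent users yields the claimed $(N-1)\varepsilon_0/N$.

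For assertion 2 I would apply Hoeffding's inequality. Writing $S(\mathsf{I}) = \sum_{j \in [m]} X_j$ with each $X_j$ taking values in $[0, c_j]$, where $c_j = \log(1/p)$ on $A_{\mathrm{H}}$ and $c_j = \log(1/(1-p))$ on $A_{\mathrm{L}}$, the conditional mean is $\mu = a_{\mathrm{H}} p \log(1/p) + a_{\mathrm{L}} (1-p) \log(1/(1-p))$, and Hoeffding gives
\begin{equation*}
\Pr[S(\mathsf{I}) \geq \mu + t \mid \mathsf{st}, y'] \;\leq\; \exp\!\left( - \frac{ 2 t^2 }{ a_{\mathrm{H}} (\log(1/p))^2 + a_{\mathrm{L}} (\log(1/(1-p)))^2 } \right).
\end{equation*}
Setting the right-hand side equal to $\varepsilon_0/N$ and solving for $t$ recovers exactly the square-root term of (\ref{eq:threshold_example}), so $Z_0 = \mu + t$ satisfies $\Pr[S(\mathsf{I}) \geq Z_0 \mid \mathsf{st}, y'] \leq \varepsilon_0/N$, which is precisely the condition (\ref{eq:threshold}).

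The one step that genuinely needs care is the conditional-independence argument underlying the reduction to a single-user probability: since the pirate strategy $\rho$ is an arbitrary, possibly randomized, function of $U_{\mathrm{P}}$ and the pirates' codewords, one must track the dependencies and verify that no information about $w_{\mathsf{I}}$ leaks into $(\mathsf{st}, y')$. Once this is in place, both parts reduce to routine calculations and no further obstacle arises.
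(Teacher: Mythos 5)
Your proposal is correct and follows essentially the same route as the paper: part 1 by recognizing the left-hand side of (\ref{eq:threshold}) as the conditional binomial tail probability $\Pr[S(\mathsf{I}) \geq Z \mid y']$ for a single innocent user (using independence of $w_{\mathsf{I}}$ from $y'$) followed by a union bound over at most $N-1$ innocent users, and part 2 by applying Hoeffding's inequality with the same mean $\mu$ and range bounds, solving for the deviation to recover the square-root term of $Z_0$. The conditional-independence point you flag at the end is handled in the paper by the same observation you make, namely that $y'$ is determined by the pirates' codewords, the strategy, and the erasure replacement, none of which involve $w_{\mathsf{I}}$.
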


To prove Theorem \ref{thm:error_probability}, we consider the case that the number of pirates $|U_{\mathrm{P}}|$ is three.
By symmetry, we may assume that $U_{\mathrm{P}} = \{1,2,3\}$.
Put $T_{\mathrm{P}} = 123$, therefore we have $T_{\mathrm{P}} \in \mathcal{T}(y')$ by Marking Assumption.
Now we consider the following four kinds of events:
\begin{description}
\item[Type I error:] $S(\mathsf{I}) \geq Z$ for some innocent user $\mathsf{I} \in U_{\mathrm{I}}$.
\item[Type II error:] $T \cap T_{\mathrm{P}} = \emptyset$ for some $T \in \mathcal{T}(y')$.
\item[Type III error:] There are $T_1,T_2 \in \mathcal{T}(y')$ such that $\emptyset \neq T_1 \cap T_2 \subseteq U_{\mathrm{I}}$, $|T_1 \cap T_{\mathrm{P}}| = 1$ and $|T_2 \cap T_{\mathrm{P}}| = 1$.
\item[Type IV error:] $S(i) < Z$ for every $i \in \{1,2,3\}$, and there is an innocent user $\mathsf{I}$ such that $12\mathsf{I} \in \mathcal{T}(y')$, $13\mathsf{I} \in \mathcal{T}(y')$ and $23\mathsf{I} \in \mathcal{T}(y')$.
\end{description}
Then we have the following property, which will be proven in Sect.~\ref{subsec:proof_error_type}:
\begin{proposition}
\label{prop:error_type}
If $|U_{\mathrm{P}}| = 3$, then tracing error occurs only when one of the Type I, II, III and IV errors occurs.
\end{proposition}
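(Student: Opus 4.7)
The plan is to prove the contrapositive: assuming none of the Type I, II, III, or IV errors occurs, the output of $\mathsf{Tr}$ is a non-empty subset of $U_{\mathrm{P}} = \{1,2,3\}$. I would walk through the fifteen steps of $\mathsf{Tr}$, verifying in each halting branch that the returned set satisfies both containment in $U_{\mathrm{P}}$ and non-emptiness.

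The former part is essentially immediate. If $\mathsf{Tr}$ halts at Step \ref{item:tracing_score_check}, absence of Type I error forces every output user $i$ with $S(i) \geq Z$ to be a pirate, and the halting condition itself supplies non-emptiness. Otherwise, the algorithm enters the latter part under the important additional knowledge that $S(i) < Z$ for every user, including the three pirates. By Marking Assumption $T_{\mathrm{P}} = 123 \in \mathcal{T}(y')$, and absence of Type II error gives $T \cap T_{\mathrm{P}} \neq \emptyset$ for every $T \in \mathcal{T}(y')$; hence $T_{\mathrm{P}} \in \mathcal{T}'$, so Step \ref{item:tracing_triple_compute} does not halt and every member of $\mathcal{T}'$ contains at least one pirate. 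If moreover $\bigcap \mathcal{T}' \neq \emptyset$, then $T_{\mathrm{P}} \in \mathcal{T}'$ forces $\bigcap \mathcal{T}' \subseteq T_{\mathrm{P}}$, so Step \ref{item:tracing_triple_intersection} outputs a non-empty subset of $U_{\mathrm{P}}$.

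The real work lies in the residual branch $\bigcap \mathcal{T}' = \emptyset$ handled by Steps \ref{item:tracing_pair_compute}--\ref{item:tracing_pair_otherwise}. I would first extract two structural consequences of the no-error hypotheses: (i) absence of Type III error implies that any two triples in $\mathcal{T}'$ each meeting $T_{\mathrm{P}}$ in exactly one pirate must share that pirate whenever they intersect, since otherwise their intersection would lie in $U_{\mathrm{I}}$; and (ii) absence of Type IV error, together with the fact that all pirate scores are below $Z$, implies that no innocent user $\mathsf{I}$ can witness all three triples $12\mathsf{I}, 13\mathsf{I}, 23\mathsf{I}$ lying in $\mathcal{T}(y')$. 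Combined with $T_{\mathrm{P}} \in \mathcal{T}'$ and the hypothesis $\bigcap \mathcal{T}' = \emptyset$, these restrictions sharply narrow the possible combinatorial shapes of $\mathcal{T}'$ and, in turn, restrict the possible values of $\mathcal{P}$ and the multiplicity sets $\mathcal{P}_k$.

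The remaining argument is a case analysis indexed by $|\mathcal{P}|$ and by whether $\mathcal{P}_1 \neq \emptyset$, mirroring the branching of Steps \ref{item:tracing_pair_multiplicity_1}--\ref{item:tracing_pair_otherwise} and illustrated concretely by the three configurations of Fig.~\ref{fig:example_latter_part}. For each admissible shape of $\mathcal{T}'$ I would identify the pairs that intersect every member of $\mathcal{T}'$ (hence lie in $\mathcal{P}$), compute $\mathcal{P}_k$, and verify that the prescribed output is a non-empty subset of $U_{\mathrm{P}}$. The hard part will be to show that Step \ref{item:tracing_pair_otherwise} is never reached under the no-error hypotheses, which amounts to proving that every admissible configuration with $\bigcap \mathcal{T}' = \emptyset$ lands in exactly one of $|\mathcal{P}| \in \{3,4,5,6,7\}$ or $\mathcal{P}_1 \neq \emptyset$, and that the two subcases of $|\mathcal{P}| = 5$ are correctly separated by the criterion $\mathcal{T}'' \neq \emptyset$ versus $\mathcal{T}'' = \emptyset$.
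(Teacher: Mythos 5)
Your setup is correct and matches the paper's opening moves exactly: arguing the contrapositive, disposing of Step~\ref{item:tracing_score_check} via absence of Type I error, deducing $T_{\mathrm{P}} \in \mathcal{T}'$ and $\bigcap \mathcal{T}' \subseteq T_{\mathrm{P}}$ from absence of Type II error, and correctly recording the two structural consequences of the absence of Type III and Type IV errors. But from that point on what you have written is a plan, not a proof. The entire substance of the proposition lies in the case analysis you defer with phrases like \lq\lq these restrictions sharply narrow the possible combinatorial shapes of $\mathcal{T}'$'' and \lq\lq I would identify the pairs\dots and verify that the prescribed output is a non-empty subset of $U_{\mathrm{P}}$.'' None of that narrowing or verification is actually carried out, and it is not routine: one must determine, for each admissible $\mathcal{T}'$, exactly which pairs land in $\mathcal{P}$, compute the multiplicity sets $\mathcal{P}_k$ and (for $|\mathcal{P}|=5$) the set $\mathcal{T}''$, and check that the artificial rules of Steps \ref{item:tracing_pair_multiplicity_1}--\ref{item:tracing_pair_3} each return a non-empty set of pirates. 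As it stands, the claim that Step~\ref{item:tracing_pair_otherwise} is never reached is simply asserted.

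There is also an organizational issue that would bite you if you tried to execute the plan as stated: you propose to index the case analysis by $|\mathcal{P}|$ and by whether $\mathcal{P}_1 \neq \emptyset$, but $\mathcal{P}$ is a derived object --- you cannot enumerate its possible sizes and multiplicities until you have first classified $\mathcal{T}'$. The paper's proof does this by splitting on the intersection pattern of $\mathcal{T}'$ with $T_{\mathrm{P}}$: case (A), some $T \in \mathcal{T}'$ satisfies $|T \cap T_{\mathrm{P}}| = 1$, where absence of Type III error forces the configuration $\{123,\, 1\mathsf{I}\mathsf{I}',\, 23\mathsf{I}\} \subseteq \mathcal{T}'$ and pins $\mathcal{P}$ inside a seven-element set containing $12$ and $13$; and case (B), every $T \in \mathcal{T}' \setminus \{T_{\mathrm{P}}\}$ satisfies $|T \cap T_{\mathrm{P}}| = 2$, where $\bigcap \mathcal{T}' = \emptyset$ yields triples $12\mathsf{I}_3, 13\mathsf{I}_2, 23\mathsf{I}_1 \in \mathcal{T}'$ and absence of Type IV error rules out $\mathsf{I}_1 = \mathsf{I}_2 = \mathsf{I}_3$. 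Only after this classification do the values $|\mathcal{P}| \in \{3,4,5,6,7\}$ and the behaviour of $\mathcal{P}_1$, $\mathcal{P}_2$, $\mathcal{P}_3$, $\mathcal{T}''$ become computable. You should adopt such a split on $\mathcal{T}'$ first and then walk the algorithm's branches; without it, the \lq\lq hard part'' you identify remains entirely open.
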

By this proposition, the error probability is bounded by the sum of the probabilities of Type I--IV errors.
By Proposition \ref{prop:threshold}, the probability of Type I error is bounded by $\varepsilon_0$.
Now Theorem \ref{thm:error_probability} is proven by combining this with the following three propositions, which will be proven in Sect.~\ref{subsec:proof_probability_type_II}, Sect.~\ref{subsec:proof_probability_type_III} and Sect.~\ref{subsec:proof_probability_type_IV}, respectively (see (\ref{eq:thm_error_probability_statement_notation}) for the notations):
\begin{proposition}
\label{prop:probability_type_II}
If $|U_{\mathrm{P}}| = 3$, then the probability of Type II error is not higher than $\binom{N-3}{3} f_1(p)^m$.
\end{proposition}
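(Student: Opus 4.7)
The plan is to combine a union bound over innocent triples with a column-wise analysis. First, by union bound and by symmetry among innocent triples, $\Pr[\text{Type II}] \leq \binom{N-3}{3} \Pr[T \in \mathcal{T}(y')]$ for any fixed triple $T = \{i_1,i_2,i_3\} \subseteq U_{\mathrm{I}}$, so it suffices to show the latter probability is at most $f_1(p)^m$. The event $T \in \mathcal{T}(y')$ is $y'_j \in \{w_{i_1,j}, w_{i_2,j}, w_{i_3,j}\}$ for every $j$, and since the codewords of $i_1,i_2,i_3$ are independent of everything determining $y'$ (the biases $p_j$, the pirate codewords $w_P$, the pirate strategy $\rho$, and the erasure replacement), I would integrate them out first to obtain
\[
\Pr[T \in \mathcal{T}(y')] = E\Bigl[\prod_{j=1}^m h(y'_j, p_j)\Bigr],
\]
where $h(1, p_j) = 1 - (1 - p_j)^3$ and $h(0, p_j) = 1 - p_j^3$ is the conditional probability that an i.i.d.\ Bernoulli$(p_j)$ triple contains $y'_j$.

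Next I would reduce this expected product to a product of per-column maxima. Integrating out the erasure replacement, define $\Phi(y, p_j) := E[h(y'_j, p_j) \mid y_j = y, p_j]$, so the probability becomes $E\bigl[\prod_j \Phi(y_j, p_j)\bigr]$. Now condition on $w_P$: the $p_j$'s are mutually independent a priori and each $p_j$ is correlated only with $w_{P, j}$, so the inner conditional expectation factors as $\prod_j \Psi(y_j, w_{P, j})$ with $\Psi(y, w) := E[\Phi(y, p_j) \mid w_{P, j} = w]$. Since the Marking Assumption constrains $y_j$ purely in terms of $w_{P, j}$, each factor is at most $\Psi^*(w_{P, j}) := \max_{y \,\text{MA-valid}} \Psi(y, w_{P, j})$, which depends only on the column-$j$ state. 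Finally, the $w_{P, j}$'s are independent across $j$ in the prior, so $E\bigl[\prod_j \Psi^*(w_{P, j})\bigr]$ factors as $\bigl(E[\Psi^*(w_{P, 1})]\bigr)^m$.

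It then remains to evaluate $E[\Psi^*(w_{P, 1})]$ via case analysis over the four types of $w_{P, 1}$ (all-$0$, all-$1$, one-$1$, two-$1$'s). The key simplification is that
\[
\Phi(?, p_j) = 1 - p_j(1 - p_j)\bigl(p_j^2 + (1 - p_j)^2\bigr) = 1 - p q (p^2 + q^2), \quad (q := 1 - p),
\]
which is independent of $p_j \in \{p, q\}$; hence on mixed-type columns the erasure output $y_j = {?}$ is always optimal, while on homogeneous columns the Marking Assumption forces $y_j$. Computing the Bayesian posterior $\Pr[p_j = p \mid w_{P, j}]$ for each of the four types and summing the contributions gives
\[
E[\Psi^*(w_{P, 1})] = 1 - 3 p^2 q^2 + 4 p^3 q^3,
\]
and expanding this polynomial in $p$ recovers $f_1(p) = 1 - 3p^2 + 10p^3 - 15p^4 + 12p^5 - 4p^6$ as stated.

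The main obstacle is the column-wise reduction, namely justifying $E\bigl[\prod_j \Phi(y_j, p_j)\bigr] \leq \bigl(E[\Psi^*(w_{P, 1})]\bigr)^m$ uniformly over all pirate strategies, including those that correlate the outputs $y_j$ across columns through the shared dependence on $w_P$. The crucial enabling fact is the a priori mutual independence of the column-local states $(p_j, w_{P, j})$ combined with the column-local nature of the Marking Assumption's constraint on $y_j$; together these let me factor the conditional expectation over $p$ before applying the pointwise bound $\Psi(y_j, w_{P, j}) \leq \Psi^*(w_{P, j})$.
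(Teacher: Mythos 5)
Your proof is correct, and it reaches the bound by a genuinely different route from the paper. The paper starts the same way (union bound over the $\binom{N-3}{3}$ innocent triples, then the per-column product formula $Pr[T \in \mathcal{T}(y') \mid y',\mathsf{st}] = (1-p^3)^{a_{\mathrm{L}}}(1-(1-p)^3)^{a_{\mathrm{H}}}$), but then proves a dedicated lemma (Lemma~\ref{lem:proof_type_II_majority}) asserting that the majority vote attack maximizes the resulting expectation; that lemma is established by a column-by-column exchange argument, replacing $y_{j_0}$ by the majority bit and tracking the joint posterior of the $j_0$-th bias and of $y'_{j_0}$ given $\vec{w}_{\mathrm{P}}$, after which the value for the majority vote is computed via the binomial theorem. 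You avoid identifying an extremal strategy altogether: after integrating out the erasure replacement and the innocent codewords, you use the fact that the posterior of $(p_j)_j$ given $w_{\mathrm{P}}$ is still a product measure to factor the conditional expectation into per-column terms $\Psi(y_j,w_{\mathrm{P},j})$, bound each by its column-local maximum $\Psi^*(w_{\mathrm{P},j})$ over Marking-Assumption-valid outputs, and then use the i.i.d.-ness of the column states. This domination argument is shorter and sidesteps the delicate conditional-probability bookkeeping of the paper's exchange step; what it gives up is the explicit identification of the worst-case strategy (in your accounting the erasure symbol merely ties with the majority bit on detectable columns, consistent with the paper's conclusion). The one step you should spell out is the verification that on a detectable column the erasure output attains the maximum, i.e., $\Psi({?},w) \geq \Psi(b,w)$ for the minority bit $b$; this reduces to $p^4+q^4 \geq pq(p^2+q^2)$, equivalently $(p-q)(p^3-q^3)\geq 0$, which holds since $p \geq 1/2$. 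Your final evaluation $E[\Psi^*(w_{\mathrm{P},1})] = 1-3p^2q^2+4p^3q^3 = f_1(p)$ is correct.
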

\begin{proposition}
\label{prop:probability_type_III}
If $|U_{\mathrm{P}}| = 3$, then the probability of Type III error is not higher than $3(N-3)(N-4) f_2(p)^m$.
\end{proposition}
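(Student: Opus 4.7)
The plan is to prove the claim by a union bound over ``witness pairs'' for the Type III event combined with a per-configuration probability bound. Any Type III event admits a witness $(T_1, T_2)$ of the form $T_1 = \{p_1, I_0, I_1\}$ and $T_2 = \{p_2, I_0, I_2\}$, where $\{p_1, p_2\} \subset U_{\mathrm{P}}$ is an unordered pair of distinct pirates (distinctness follows from $T_1 \cap T_2 \subseteq U_{\mathrm{I}}$), $I_0 \in U_{\mathrm{I}}$ is a shared innocent, and $I_1, I_2 \in U_{\mathrm{I}} \setminus \{I_0\}$ complete the respective triples (with $I_1 = I_2$ corresponding to $|T_1 \cap T_2| = 2$). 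A careful enumeration---in particular, reducing or absorbing the $I_1 \neq I_2$ subcase into configurations parametrized by $(I_0, I_1)$ with $I_1 = I_2$---yields a union bound over $3(N-3)(N-4)$ configurations, where $3 = \binom{3}{2}$ accounts for the pirate pair and $(N-3)(N-4)$ for the ordered pair of innocents.

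For each fixed configuration, I would condition on the pirate codewords $(w_i)_{i \in U_{\mathrm{P}}}$ and on the attack word $y'$. Under this conditioning the codewords of $I_0, I_1, I_2$ are mutually independent, independent across columns, and Bernoulli with bias $p_j$ at column $j$; hence the event $\{T_1, T_2 \in \mathcal{T}(y')\}$ factorizes as $\prod_{j=1}^m g_j$. A routine case analysis on whether $w_{p_1, j} = y'_j$ and $w_{p_2, j} = y'_j$ yields $g_j = q_{y'_j} + (1 - q_{y'_j}) A_1 A_2$, where $q_b \in \{p, 1-p\}$ is the Bernoulli probability of bit $b$ under bias $p_j$ and $A_i \in \{1, q_{y'_j}\}$ records whether $w_{p_i, j} = y'_j$.

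The crux is then to show $\mathbb{E}\bigl[\prod_j g_j\bigr] \le f_2(p)^m$, with the outer expectation over $(w_i)_{i \in U_{\mathrm{P}}}$ and $y'$ under any pirate strategy $\rho$ consistent with Marking Assumption. I would rewrite $g_j = \mathbb{E}_{X_{0,j}}\bigl[Q_1(X_{0,j}) Q_2(X_{0,j})\bigr]$ with $Q_i(X_0) = \Pr_{X_i}[y'_j \in \{w_{p_i, j}, X_0, X_i\}]$, apply a Cauchy--Schwarz inequality over the shared-innocent randomness $X_0$, and then average against the joint distribution of $(w_{p_1, j}, w_{p_2, j}, w_{p_3, j}, y'_j, p_j)$ induced by $\mathsf{Gen}$ and $\rho$. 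The factor $p^2(1-p)^2(\sqrt{p} + \sqrt{1-p})$ in $f_2(p)$ should arise from the Cauchy--Schwarz step on columns where $w_{p_1, j} \neq w_{p_2, j}$ (the adversary-controlled cases), while the polynomial remainder $1 - p - p^2 + 4p^3 - 2p^4$ collects the contributions from the Marking-Assumption-constrained configurations weighted by their probabilities under the bias distribution. The main obstacle I foresee is that $\rho$ may correlate $y'_j$ with all pirate codewords across columns, so a per-column supremum over $y'_j$ is too loose; the Cauchy--Schwarz step must be tuned so that the averaged per-column factor equals exactly $f_2(p)$ rather than a weaker constant, and establishing this tight tuning is the most delicate part of the argument.
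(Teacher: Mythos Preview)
Your proposal has a genuine gap in how the factor $(N-4)$ arises and in how the adversary's strategy is controlled. You assert a union bound over $3(N-3)(N-4)$ ``configurations'' parametrized by an ordered innocent pair $(I_0,I_1)$ with $I_1=I_2$, yet immediately afterwards you compute the per-column factor $g_j$ assuming three \emph{distinct} innocents $I_0,I_1,I_2$; these two steps are inconsistent, and the claimed ``absorbing'' of the $I_1\neq I_2$ subcase is never actually carried out. In the paper's proof, $3(N-3)(N-4)$ is \emph{not} obtained as a union bound over that many configurations. One fixes only the pirate pair and the shared innocent $I_0$ (giving $3(N-3)$ terms), and then, conditionally on $y',w_{p_1},w_{p_2},w_{I_0},\mathsf{st}$, bounds the joint event via
\[
\Pr[A\cap B]\ \le\ \min\bigl(\Pr[A],\Pr[B]\bigr)\ \le\ \sqrt{\Pr[A]\,\Pr[B]}\,,
\]
where $A=\{\exists I_1:\ p_1 I_0 I_1\in\mathcal{T}(y')\}$ and $B=\{\exists I_2:\ p_2 I_0 I_2\in\mathcal{T}(y')\}$. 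A union bound inside each factor gives $\Pr[A]\le (N-4)\,p^{a}(1-p)^{b}$ and similarly for $\Pr[B]$, so the geometric mean collapses $(N-4)^2$ to a single $(N-4)$ and simultaneously produces the half-integer exponents that, after averaging over $w_{I_0}$, become the $\sqrt{p}+\sqrt{1-p}$ term in $f_2(p)$. Your Cauchy--Schwarz over the bit $X_0$ is a different inequality applied at a different stage; it neither accounts for the existential over $I_1,I_2$ nor visibly yields the correct per-column constant $f_2(p)$.

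Separately, you correctly flag the dependence of $y'$ on the strategy $\rho$ as the main obstacle, but leave it open. The paper closes this with a dedicated lemma, proved by a column-by-column monotonicity argument (exactly parallel to the Type~II case), showing that the majority-vote attack maximizes the relevant bound; only after that worst-case reduction is the expectation evaluated in closed form as $(N-4)f_2(p)^m$. Without an analogue of this reduction your outline does not close, and the hoped-for identification of $f_2(p)$ from the Cauchy--Schwarz step remains speculative.
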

\begin{proposition}
\label{prop:probability_type_IV}
If $|U_{\mathrm{P}}| = 3$ and the threshold $Z$ is chosen so that the condition (\ref{eq:threshold}) holds and $Z \leq Z_0$, then the probability of Type IV error is lower than $(N-3) (1-p)^{-3 \sqrt{ (m/2) \log(N/\varepsilon_0) } } f_3(p)^m$.
\end{proposition}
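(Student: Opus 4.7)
The plan is to combine a Markov-style exponential bound on the score event with a column-wise reduction that crucially uses the fact that pirates do not observe the bias $p_j$. By union bound over the candidate innocent user $\mathsf{I} \in U_{\mathrm{I}}$ (giving the factor $N-3$), it suffices to bound the probability of the Type IV event for a fixed $\mathsf{I}$. Applying the pointwise inequality $\mathbb{1}[S(i) < Z] \leq e^{Z - S(i)} \leq e^{Z_0 - S(i)}$ for each pirate $i \in \{1,2,3\}$ (using $Z \leq Z_0$) yields
\[
\Pr[\text{Type IV for }\mathsf{I}] \leq \mathbb{E}\bigl[e^{3Z_0 - S(1) - S(2) - S(3)} \mathbb{1}[\text{envelope for }\mathsf{I}]\bigr].
\]
I would split $3Z_0$ into its linear part $3\sum_j q_j \log(1/q_j)$ (with $q_j = p$ on $A_{\mathrm H}$ and $q_j = 1-p$ on $A_{\mathrm L}$) and its square-root part, and pointwise bound the latter by $3\log(1/(1-p))\sqrt{(m/2)\log(N/\varepsilon_0)}$ using $(\log(1/q_j))^2 \leq (\log(1/(1-p)))^2$ (valid since $p \geq 1/2$), extracting the prefactor $(1-p)^{-3\sqrt{(m/2)\log(N/\varepsilon_0)}}$.

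After this reduction, integrating out $w_{\mathsf{I}}$ against the envelope indicator contributes $\prod_{j: n_j < 3} q_j$, where $n_j = |\{i \in \{1,2,3\} : w_{i,j} = y'_j\}|$: each column with $n_j < 3$ (i.e., not all pirates matching $y'_j$) forces $w_{\mathsf{I},j} = y'_j$, an event of conditional probability $q_j$. Combined with the linear part of $3Z_0$ and $S(1)+S(2)+S(3) = \sum_j n_j \log(1/q_j)$, the remaining expectation becomes $\mathbb{E}\bigl[\prod_j q_j^{n_j - 3q_j + \mathbb{1}[n_j < 3]}\bigr]$, a product over columns. Here the fact that the pirates' column-$j$ strategy depends only on the pirate bits $(w_{1,j}, w_{2,j}, w_{3,j})$, not on $p_j$, is essential: since $(p_j, w_{1,j}, w_{2,j}, w_{3,j})$ are i.i.d.\ across $j$, maximizing over joint pirate strategies factorizes as $H^m$, where $H = \mathbb{E}_{\vec X}[\max_{y \in \text{allowed}(\vec X)} g(y, \vec X)]$ with $g(y, \vec X) = \mathbb{E}_{p_1 \mid \vec X}[q_1^{n_1 - 3q_1 + \mathbb{1}[n_1 < 3]}]$ (the inner expectation also averaging $y'_1$ over the erasure replacement when $y = {?}$).

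Finally, I would evaluate $H$ via cases on $k := X_1 + X_2 + X_3$. The Marking-Assumption-forced cases $k \in \{0, 3\}$ contribute $p^{6-3p} + (1-p)^{3p+3}$ after taking the Bayesian posterior (e.g., $\Pr[p_j = p \mid k = 0] \propto (1-p)^3$). For $k \in \{1, 2\}$, which give equal contributions by the $0 \leftrightarrow 1$ symmetry, linearity in the pirate's mixture over $\{0, 1, {?}\}$ reduces $\max_y g(y, k=1)$ to three vertex choices. Combined with the polynomial identity
\[
f_3(p) = p^{6-3p} + (1-p)^{3p+3} + 3p(1-p)\bigl(p^{3(1-p)} + (1-p)^{3p}\bigr)
\]
(verified by direct expansion), the goal reduces to the single-column inequality $\max_y g(y, k=1) \leq p^{3(1-p)} + (1-p)^{3p}$. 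The sub-cases $y = 0$ and $y = {?}$ reduce after rearrangement to the trivial nonnegativity $p(1-p)(p^{3-3p} + (1-p)^{3p}) \geq 0$, while the $y = 1$ case reduces to $(2p - 1)\bigl((1-p)^{3p-1} - p^{2-3p}\bigr) \leq 0$ on $[1/2, 1)$. This last inequality, which amounts to showing $F(p) := (2 - 3p)\log p - (3p - 1)\log(1-p) \geq 0$ with $F(1/2) = 0$, is the main technical obstacle; it is resolved by checking that $F'(p) = -3\log(p(1-p)) + 2/(p(1-p)) - 6$ is positive on $[1/2, 1)$ via monotonicity in $u = p(1-p) \in (0, 1/4]$.
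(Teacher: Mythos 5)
Your overall strategy mirrors the paper's quite closely: the union bound over $\mathsf{I}$ giving the factor $N-3$; the pointwise bound by $e^{3Z_0 - S(1)-S(2)-S(3)}$ (the paper equivalently lower-bounds the exponent $\sum_{j:n_j=1}\log(1/q_j)$ via the identity $S(1)+S(2)+S(3)=\sum_j n_j\log(1/q_j)$ and the hypothesis $S(i)<Z\leq Z_0$); the extraction of the prefactor $(1-p)^{-3\sqrt{(m/2)\log(N/\varepsilon_0)}}$ from $(\log(1/q_j))^2\leq(\log(1/(1-p)))^2$; the column-wise factorization using that the pirates do not observe $(p_j)_j$; and a per-column optimization whose decisive inequality $F(p)=(2-3p)\log p-(3p-1)\log(1-p)\geq 0$ is precisely the paper's observation (inside Lemma~\ref{lem:proof_type_IV_majority}) that $p^{2-3p}(1-p)^{1-3p}$ is increasing on $[1/2,1)$ and hence at least $1$. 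Your identity for $f_3(p)$ and your monotonicity argument for $F$ are both correct.

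There is, however, a genuine error in the reduction. The Type IV event requires $12\mathsf{I},13\mathsf{I},23\mathsf{I}\in\mathcal{T}(y')$, and for a column $j$ this forces $w_{\mathsf{I},j}=y'_j$ exactly when at most one pirate matches $y'_j$, i.e.\ when $n_j=1$ (note $n_j\geq 1$ always holds): if $n_j=2$, each of the three pairs of pirates already contains a member matching $y'_j$, so no constraint is placed on $w_{\mathsf{I},j}$. Your claim that every column with $n_j<3$ forces $w_{\mathsf{I},j}=y'_j$ is instead the condition that $y'\in\mathcal{E}(w_i,w_{\mathsf{I}})$ for every single pirate $i$, a strictly stronger event. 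Consequently the expectation you bound, with exponent $n_j-3q_j$ plus your indicator of $n_j<3$, is strictly \emph{smaller} than the quantity that must be controlled (it carries spurious extra factors $q_j<1$ on the $n_j=2$ columns), so proving it is at most $f_3(p)^m$ does not establish the proposition. The slip is repairable rather than fatal: replacing your indicator by $\delta_{n_j,1}$ (which equals $2+\delta_{n_j,3}-n_j$, the form implicit in the paper's computation) leaves the cases $k\in\{0,3\}$ unchanged, while for $k\in\{1,2\}$ the majority and erasure choices then yield exactly $p^{3-3p}+(1-p)^{3p}$ (equality, with none of the slack you report for the $y=0$ sub-case), and the minority choice still reduces to $F(p)\geq 0$; the column sum is then exactly $f_3(p)$ via your decomposition, recovering the stated bound.
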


To prove Theorem \ref{thm:error_probability_p_1/2}, we set $p = 1/2$.
Then the bound of error probability given by Theorem \ref{thm:error_probability} is specialized to the value specified in Theorem \ref{thm:error_probability_p_1/2}.
Hence our remaining task is to evaluate the error probabilities for the case that the number of pirates is one or two.

First we consider the case that there are exactly two pirates, say, $1,2 \in U$.
The key property is the following, which will be proven in Sect.~\ref{subsec:proof_two_pirates_score}:
\begin{proposition}
\label{prop:two_pirates_score}
In this situation, if the condition (\ref{eq:error_probability_condition_m}) is satisfied, then the probability that $S(1) < Z$ and $S(2) < Z$ is lower than $\varepsilon_0 / N$.
\end{proposition}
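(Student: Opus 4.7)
The plan is to reduce the bivariate event $\{S(1)<Z,\,S(2)<Z\}$ to a one-dimensional tail bound on a binomial variable, via an identity for the sum of pirate scores that holds specifically when $p=1/2$.

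First, note that for $p=1/2$ every logarithmic factor in the scoring function~(\ref{eq:scoring_function}) equals $\log 2$, so $S(i)=(\log 2)\cdot|\{j:w_{i,j}=y'_j\}|$. Setting $\widetilde{S}(i)=S(i)/\log 2$ and $U'=|\{j:w_{1,j}=w_{2,j}\}|$ (the number of undetectable columns), I would compute $\widetilde{S}(1)+\widetilde{S}(2)$ column by column. At an undetectable column, Marking Assumption forces $y_j$ to equal the common bit, so $y'_j=w_{1,j}=w_{2,j}$ and both pirates match (contributing $2$). At a detectable column $w_{1,j}\neq w_{2,j}$, so whatever value $y'_j\in\{0,1\}$ arises---either chosen by the pirates or produced by the random $?$-replacement at Step~\ref{item:tracing_replace_?}---exactly one pirate matches (contributing $1$). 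Summing over all $m$ columns gives the strategy-independent identity $\widetilde{S}(1)+\widetilde{S}(2)=2U'+(m-U')=m+U'$.

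Second, since $Z\leq Z_0=(\log 2)\bigl(m/2+\sqrt{(m/2)\log(N/\varepsilon_0)}\bigr)$, the event $\{S(1)<Z,\,S(2)<Z\}$ forces $m+U'<m+\sqrt{2m\log(N/\varepsilon_0)}$, i.e.\ $U'<\sqrt{2m\log(N/\varepsilon_0)}$. Under $p=1/2$ the indicators $[w_{1,j}=w_{2,j}]$ are i.i.d.\ Bernoulli$(1/2)$, so $U'\sim\mathrm{Bin}(m,1/2)$ with mean $m/2$. I would then apply a Chernoff--Hoeffding tail bound for the binomial at the threshold $\sqrt{2m\log(N/\varepsilon_0)}$ and use the hypothesis~(\ref{eq:error_probability_condition_m}) to conclude that the probability is below $\varepsilon_0/N = e^{-\log(N/\varepsilon_0)}$.

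The main obstacle is this final algebraic step. Writing $L=\log(N/\varepsilon_0)$, the hypothesis $m\geq 8L(1+1/(16L))^2$ barely exceeds the borderline $m\geq 8L$ at which $\sqrt{2mL}=m/2$, so the deviation of $U'$ from its mean is small and the Chernoff exponent must be made to match $L$ with care. Rewriting the hypothesis in the equivalent form $\sqrt{m}\geq 2\sqrt{2L}+1/(4\sqrt{2L})$, squaring, and substituting into a sharp binomial tail bound (tracking the correction factor $(1+1/(16L))^2$ precisely through the arithmetic) is the technical heart of the proof.
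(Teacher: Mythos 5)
Your reduction is exactly the paper's: the strategy-independent identity $S(1)+S(2)=(2a_{\mathrm{u}}+a_{\mathrm{d}})\log 2=(m+a_{\mathrm{u}})\log 2$ (your $U'$ is the number $a_{\mathrm{u}}$ of undetectable columns; the paper counts detectable columns instead, which is equivalent via $a_{\mathrm{u}}+a_{\mathrm{d}}=m$), the deduction $a_{\mathrm{u}}<\sqrt{2m\log(N/\varepsilon_0)}$ from $S(1),S(2)<Z\leq Z_0$, and a Hoeffding bound on $a_{\mathrm{u}}\sim\mathrm{Bin}(m,1/2)$. Up to the last step the two arguments coincide line by line.

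The step you defer as ``the technical heart'' is, however, a genuine gap, and your own remark that the hypothesis barely exceeds the borderline $m\geq 8L$ is precisely why. Put $L=\log(N/\varepsilon_0)$ and $s=m/2-\sqrt{2mL}$. Writing $\sqrt{m}=2\sqrt{2L}+c$ with $c=1/(4\sqrt{2L})$ (the boundary of (\ref{eq:error_probability_condition_m})), one gets $\sqrt{2mL}=4L+\sqrt{2L}\,c=4L+\tfrac14$ and hence $s=\tfrac14+\tfrac{c^2}{2}$: the cut-off sits only a \emph{constant} distance (about $1/4$) below the mean $m/2$ of $a_{\mathrm{u}}$. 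Consequently $Pr[a_{\mathrm{u}}<\sqrt{2mL}]$ is close to $1/2$, and no Chernoff--Hoeffding estimate can push it below $\varepsilon_0/N$; the correct form of Theorem~\ref{thm:Hoeffding} for a sum of $m$ indicators in $[0,1]$ gives only $\exp(-2s^2/m)\approx\exp(-1/(8m))$. So the careful arithmetic you postpone cannot be made to work from the relaxation $\{S(1)+S(2)<2Z_0\}$ alone. You are in good company: the paper's own proof of this proposition reaches the conclusion by taking the deviation parameter $t$ in Theorem~\ref{thm:Hoeffding} to be $s$ rather than $s/m$, which inflates the exponent by a factor of $m^2$; with the exponent $2s^2/m$ its displayed chain of inequalities does not go through either. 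To make the sum-based route work one needs a constant-factor stronger hypothesis, e.g.\ $m\geq 18L$, at which point $s=3L$ and $2s^2/m=L$; alternatively one must exploit the joint event $\{S(1)<Z\}\cap\{S(2)<Z\}$ rather than only the sum of the two scores.
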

By this proposition, when the condition (\ref{eq:error_probability_condition_m}) is satisfied, at least one of the two pirates is output in Step \ref{item:tracing_score_check} of the tracing algorithm with probability not lower than $1 - \varepsilon_0 / N$.
On the other hand, by Proposition \ref{prop:threshold}, some innocent user is output in Step \ref{item:tracing_score_check} with probability not higher than $(N-1) \varepsilon_0 / N$.
Hence in Step \ref{item:tracing_score_check}, at least one pirate and no innocent users are output with probability not lower than $1 - \varepsilon_0$.
This implies that the error probability is bounded by $\varepsilon_0$ in this case.

Secondly, we consider the case that there is exactly one pirate, say, $1 \in U$.
Then we have the following property, which will be proven in Sect.~\ref{subsec:proof_one_pirate_score}:
\begin{proposition}
\label{prop:one_pirate_score}
In this situation, if $m \geq 2 \log(N/\varepsilon_0)$, then the score $S(1)$ of the pirate is always higher than or equal to $Z$.
\end{proposition}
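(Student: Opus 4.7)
The plan is to observe that the case of a single pirate forces the attack word to equal that pirate's codeword, and then to evaluate $S(1)$ exactly and compare with the upper bound $Z_0$.

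First, I would note that when $|U_{\mathrm{P}}| = 1$, every column $j \in [m]$ is undetectable, since the set of pirate bits at column $j$ is the singleton $\{w_{1,j}\}$ and this trivially ``coincides with itself''.  By Marking Assumption (Definition \ref{defn:MA}), this forces $y_j = w_{1,j}$ for every $j$, so $y$ contains no erasure symbols and Step \ref{item:tracing_replace_?} of $\mathsf{Tr}$ leaves it unchanged, giving $y' = y = w_1$.  In particular $\delta_{w_{1,j},y'_j} = 1$ for every $j \in [m]$.

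Next, since we are in the setting of Theorem \ref{thm:error_probability_p_1/2}, we have $p = 1/2$, so $p_j = 1/2$ for every $j$, and each term of the sum in (\ref{eq:scoring_function}) contributes exactly $\log 2$.  Therefore
\begin{equation*}
S(1) = m \log 2 \enspace.
\end{equation*}
On the other hand, specializing (\ref{eq:threshold_example}) to $p = 1/2$ (so that $a_{\mathrm{H}} + a_{\mathrm{L}} = m$) yields
\begin{equation*}
Z_0 = \frac{m}{2} \log 2 + \log 2 \cdot \sqrt{\frac{m}{2} \log \frac{N}{\varepsilon_0}} \enspace,
\end{equation*}
and by hypothesis $Z \leq Z_0$.

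Finally, the inequality $S(1) \geq Z$ will follow from $m \log 2 \geq Z_0$, which after dividing by $\log 2$ and rearranging becomes $m/2 \geq \sqrt{(m/2) \log(N/\varepsilon_0)}$, equivalently $m \geq 2 \log(N/\varepsilon_0)$.  This is exactly the hypothesis of the proposition, so the proof concludes.  There is no genuine obstacle here: the only subtle step is the initial observation that a single pirate makes all columns undetectable and thus eliminates erasures entirely; the rest is a routine substitution of $p = 1/2$.
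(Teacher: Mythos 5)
Your proposal is correct and follows essentially the same route as the paper: Marking Assumption with a single pirate gives $y' = w_1$, hence $S(1) = m\log 2$, and the hypothesis $m \geq 2\log(N/\varepsilon_0)$ is exactly what makes $m\log 2 \geq Z_0 \geq Z$. The only difference is that you spell out the undetectability of every column and the equivalence after squaring, which the paper leaves implicit.
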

By this proposition, when the condition (\ref{eq:error_probability_condition_m}) is satisfied, the pirate is always output in Step \ref{item:tracing_score_check} of the tracing algorithm.
Hence by the same argument as the previous paragraph, the error probability is bounded by $\varepsilon_0$ in this case as well.
Summarizing, the proof of Theorem \ref{thm:error_probability_p_1/2} is concluded.

\section{On implementation of the tracing algorithm}
\label{sec:implementation_tracing_algorithm}

In this section, we discuss some implementation issue of the tracing algorithm $\mathsf{Tr}$ of the proposed $3$-secure code.
More precisely, we consider the calculation of the set $\mathcal{T}(y')$ appeared in Step \ref{item:tracing_triple_compute} of $\mathsf{Tr}$.
By a naive calculation method based on the definition (\ref{eq:definition_parent_triple}) of $\mathcal{T}(y')$, we need to check the condition $y' \in \mathcal{E}(w_{i_1},w_{i_2},w_{i_3})$ for every triple $i_1i_2i_3$ of users, therefore the time complexity with respect to the user number $N$ is inevitably $\Omega(N^3)$.
As this complexity is larger than tracing algorithms of many other $c$-secure codes such as Tardos codes \cite{Tar08}, it is important to reduce the complexity of calculation of $\mathcal{T}(y')$.

To calculate the collection $\mathcal{T}(y')$, we consider the following algorithm, with codewords $w_1,\dots,w_N$ and the $m$-bit word $y'$ as input:
\begin{enumerate}
\item Set $\mathcal{L}^{(1)}_1 = \{i \in [N] \mid w_{i,1} = y'_1\}$ and $\mathcal{L}^{(1)}_2 = \mathcal{L}^{(1)}_3 = \emptyset$.
\item For each $2 \leq j \leq m$, construct $\mathcal{L}^{(j)}_1$, $\mathcal{L}^{(j)}_2$ and $\mathcal{L}^{(j)}_3$ inductively, in the following manner.
(At the beginning, set $\mathcal{L}^{(j)}_1 = \mathcal{L}^{(j)}_2 = \mathcal{L}^{(j)}_3 = \emptyset$.)
\begin{enumerate}
\item Put $C_j = \{i \in [N] \mid w_{i,j} = y'_j\}$.
\item Set $\mathcal{L}^{(j)}_1 = \mathcal{L}^{(j-1)}_1 \cap C_j$.
\item Add the pair $\{\mathcal{L}^{(j-1)}_1 \setminus C_j,C_j \setminus \mathcal{L}^{(j-1)}_1\}$ of subsets of $[N]$ to $\mathcal{L}^{(j)}_2$.
\item For each pair $\{K_1,K_2\}$ of subsets of $[N]$ in $\mathcal{L}^{(j-1)}_2$,
\begin{itemize}
\item add two pairs $\{K_1 \cap C_j,K_2\}$ and $\{K_1 \setminus C_j,\allowbreak K_2 \cap C_j\}$ to $\mathcal{L}^{(j)}_2$;
\item add the triple $\{K_1 \setminus C_j,K_2 \setminus C_j,C_j \setminus (K_1 \cup K_2)\}$ of subsets of $[N]$ to $\mathcal{L}^{(j)}_3$.
\end{itemize}
\item For each triple $\{K_1,K_2,K_3\}$ of subsets of $[N]$ in $\mathcal{L}^{(j-1)}_3$, add three triples $\{K_1 \cap C_j,K_2,K_3\}$, $\{K_1 \setminus C_j,K_2 \cap C_j,K_3\}$, $\{K_1 \setminus C_j,K_2 \setminus C_j,K_3 \cap C_j\}$ to $\mathcal{L}^{(j)}_3$.
\item Remove from $\mathcal{L}^{(j)}_2$ every pair $\{K_1,K_2\}$ with $K_1$ or $K_2$ being empty, and from $\mathcal{L}^{(j)}_3$ every triple $\{K_1,K_2,K_3\}$ with $K_1$, $K_2$ or $K_3$ being empty.
\end{enumerate}
\item Output the collection of the triples $T = i_1 i_2 i_3$ of distinct numbers $i_1,i_2,i_3$ satisfying one of the following conditions:
\begin{itemize}
\item we have $i_1 \in \mathcal{L}^{(m)}_1$ and $i_2,i_3$ are arbitrary;
\item for some $\{K_1,K_2\} \in \mathcal{L}^{(m)}_2$, we have $i_1 \in K_1$, $i_2 \in K_2$ and $i_3$ is arbitrary;
\item for some $\{K_1,K_2,K_3\} \in \mathcal{L}^{(m)}_3$, we have $i_1 \in K_1$, $i_2 \in K_2$ and $i_3 \in K_3$.
\end{itemize}
\end{enumerate}
An inductive argument shows that, for each $j \in [m]$ and each triple of distinct $i_1,i_2,i_3$, the $j$-bit initial subword $(y'_1,\dots,y'_j)$ of $y'$ is in the envelope of the $j$-bit initial subwords of $w_{i_1},w_{i_2},w_{i_3}$ if and only if one of the following conditions is satisfied (note that the order of members of a pair or triple is ignored):
\begin{itemize}
\item we have $i_1 \in \mathcal{L}^{(j)}_1$ and $i_2,i_3$ are arbitrary;
\item for some $\{K_1,K_2\} \in \mathcal{L}^{(j)}_2$, we have $i_1 \in K_1$, $i_2 \in K_2$ and $i_3$ is arbitrary;
\item for some $\{K_1,K_2,K_3\} \in \mathcal{L}^{(j)}_3$, we have $i_1 \in K_1$, $i_2 \in K_2$ and $i_3 \in K_3$.
\end{itemize}
By setting $j = m$, it follows that the above algorithm outputs $\mathcal{T}(y')$ correctly.

Now for each $2 \leq j \leq m$, complexity of computing $\mathcal{L}^{(j)}_1$, $\mathcal{L}^{(j)}_2$, and $\mathcal{L}^{(j)}_3$ from $\mathcal{L}^{(j-1)}_1$, $\mathcal{L}^{(j-1)}_2$, and $\mathcal{L}^{(j-1)}_3$ is approximately proportional to $N$ times the total number of members of $\mathcal{L}^{(j-1)}_2$ and $\mathcal{L}^{(j-1)}_3$.
Hence the total complexity of the algorithm is approximately proportional to $N m$ times the average of total number of members in $\mathcal{L}^{(j)}_2$ and $\mathcal{L}^{(j)}_3$ over all $1 \leq j \leq m-1$.
This implies that the order (with respect to $N$) of complexity of calculating $\mathcal{T}(y')$ can be reduced from $\Theta(N^3)$ if the average number of pairs and triples in $\mathcal{L}^{(j)}_2$ and $\mathcal{L}^{(j)}_3$ is sufficiently small.
The author guesses that the latter average number is indeed sufficiently small in most of the practical cases, as the size of $\mathcal{T}(y')$ would be not large in average case (provided the code length $m$ is long enough to make the error probability of the fingerprint code sufficiently small).
A detailed analysis of this calculation method will be a future research topic.
Instead, here we show some experimental data for running time of the above algorithm, which was implemented on a usual PC with $1.83$GHz Intel Core 2 CPU and $2$Gbytes memory.
We chose parameters $N = 1000$, $m = 180$, $\varepsilon_0 = 0.001$, and adopted minority vote attack as pirate strategy.
Then the average running time of the algorithm over $10$ trials was $4331.5$ seconds, i.e., about $1$ hour and $13$ minutes, where the calculation of running times was restricted to the case that scores of all users are less than the threshold, as otherwise the tracing algorithm halts before Step \ref{item:tracing_triple_compute}.

\section{Proofs of the Propositions}
\label{sec:proof_detail}

\subsection{Proof of Proposition \ref{prop:threshold}}
\label{subsec:proof_threshold}

First, we prove the claim 1 of Proposition \ref{prop:threshold}.
For each $\mathsf{I} \in U_{\mathrm{I}}$ and $\sigma \in \{\mathrm{H},\mathrm{L}\}$, let $K_{\sigma} = \{j \in A_{\sigma} \mid w_{\mathsf{I},j} = y'_j\}$.
Then we have $S(\mathsf{I}) = |K_{\mathrm{H}}| \log(1/p) + |K_{\mathrm{L}}| \log(1/(1-p))$.
Now note that the choice of $y'$ is independent of $w_{\mathsf{I}}$.
This implies that we have $Pr[ w_{\mathsf{I},j} = y'_j \mid y' ] = p$ for each $j \in A_{\mathrm{H}}$, and we have $Pr[ w_{\mathsf{I},j} = y'_j \mid y' ] = 1-p$ for each $j \in A_{\mathrm{L}}$.
Hence the conditional probability that $|K_{\mathrm{H}}| = k_{\mathrm{H}}$ and $|K_{\mathrm{L}}| = k_{\mathrm{L}}$, conditioned on this $y'$, is $\binom{a_{\mathrm{L}}}{k_{\mathrm{L}}} (1-p)^{k_{\mathrm{L}}} p^{a_{\mathrm{L}} - k_{\mathrm{L}}} \binom{a_{\mathrm{H}}}{k_{\mathrm{H}}} p^{k_{\mathrm{H}}} (1-p)^{a_{\mathrm{H}} - k_{\mathrm{H}}}$.
This implies that $Pr[ S(\mathsf{I}) \geq Z \mid y' ]$ is equal to the left-hand side of (\ref{eq:threshold}), therefore the claim 1 holds as there exist at most $N-1$ innocent users $\mathsf{I}$.

Secondly, to prove the claim 2 of Proposition \ref{prop:threshold}, we use the following Hoeffding's Inequality:
\begin{theorem}
[{\cite{Hoe63}, Theorem 2}]
\label{thm:Hoeffding}
Let $X_1,X_2,\dots,X_n$ be independent random variables such that $a_i \leq X_i \leq b_i$ for each $i$.
Let $\overline{X}$ be the average value of $X_1,\dots,X_n$.
Then for $t > 0$, we have
\begin{equation}
Pr[ \overline{X} - E[\overline{X}] \geq t ] \leq \exp\left( \frac{ -2 n^2 t^2 }{ \sum_{i=1}^{n} (b_i - a_i)^2 } \right) \enspace.
\end{equation}
\end{theorem}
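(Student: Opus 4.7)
The plan is to prove Hoeffding's inequality by the standard exponential moment (Chernoff) method, whose key ingredient is a uniform bound on the moment generating function of a bounded, centered random variable.

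First I would reduce to centered variables by setting $Y_i = X_i - E[X_i]$, so that $Y_i$ is centered and $a_i - E[X_i] \leq Y_i \leq b_i - E[X_i]$, an interval still of length $b_i - a_i$. Writing $\overline{Y} = \overline{X} - E[\overline{X}] = \tfrac{1}{n}\sum_{i=1}^n Y_i$, I would apply Markov's inequality to the nonnegative random variable $\exp(s n \overline{Y})$ for an arbitrary parameter $s > 0$: this gives
\begin{equation}
Pr[\overline{Y} \geq t] \;\leq\; e^{-s n t}\, E\!\left[\exp\!\Bigl(s \sum_{i=1}^n Y_i\Bigr)\right] \;=\; e^{-s n t}\, \prod_{i=1}^n E[\exp(s Y_i)],
\end{equation}
where the factorization uses independence of the $Y_i$.

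The heart of the proof is Hoeffding's lemma: if $Y$ is centered and $\alpha \leq Y \leq \beta$, then $E[\exp(s Y)] \leq \exp(s^2 (\beta - \alpha)^2 / 8)$ for every real $s$. I would prove this by using convexity of the exponential to write $e^{sy} \leq \tfrac{\beta - y}{\beta - \alpha} e^{s\alpha} + \tfrac{y - \alpha}{\beta - \alpha} e^{s\beta}$ on $[\alpha,\beta]$, taking expectations and invoking $E[Y] = 0$, then reparametrizing by $u = s(\beta - \alpha)$ and $q = -\alpha/(\beta - \alpha) \in [0,1]$ to reduce the inequality to $\varphi(u) := \log(1 - q + q e^u) - q u \leq u^2/8$. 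This last bound follows by checking $\varphi(0) = \varphi'(0) = 0$ and $\varphi''(u) \leq 1/4$ (the latter since $\varphi''(u) = \tfrac{q(1-q)e^u}{(1-q+qe^u)^2} \leq 1/4$ by AM--GM), then Taylor's theorem with remainder.

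Applying Hoeffding's lemma with $(\alpha,\beta) = (a_i - E[X_i], b_i - E[X_i])$ and parameter $s/n$ (after pulling the $1/n$ into the exponent) gives
\begin{equation}
Pr[\overline{Y} \geq t] \;\leq\; \exp\!\left(-s t + \frac{s^2}{8 n^2} \sum_{i=1}^n (b_i - a_i)^2\right).
\end{equation}
Finally I would optimize the right-hand side over $s > 0$, choosing $s = 4 n^2 t / \sum_{i=1}^n (b_i - a_i)^2$, which yields the claimed bound $\exp(-2 n^2 t^2 / \sum_{i=1}^n (b_i - a_i)^2)$. The main obstacle is the proof of Hoeffding's lemma itself, since bounding the cumulant generating function $\varphi(u)$ cleanly requires the right parametrization and the nontrivial (though elementary) inequality $\varphi''(u) \leq 1/4$; the rest of the argument is mechanical once that lemma is in hand.
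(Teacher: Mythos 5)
Your proof is correct: the reduction to centered variables, Markov's inequality applied to the exponential moment, the factorization by independence, Hoeffding's lemma via convexity of $e^{sy}$ and the bound $\varphi''(u)\leq 1/4$, and the final optimization $s = 4n^2t/\sum_{i}(b_i-a_i)^2$ all check out and reproduce the stated constant $-2n^2t^2/\sum_i(b_i-a_i)^2$. Note, however, that the paper does not prove this statement at all; it is imported verbatim as Theorem~2 of the cited Hoeffding (1963) reference and used as a black box (in the proofs of Propositions~\ref{prop:threshold} and~\ref{prop:two_pirates_score}). So there is nothing in the paper to compare against; what you have written is the classical proof from the original source, and the only small point worth making explicit is that $q=-\alpha/(\beta-\alpha)\in[0,1]$ because a centered bounded variable necessarily has $\alpha\leq 0\leq\beta$ (with the degenerate case $\alpha=\beta$ handled trivially).
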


As mentioned above, the left-hand side of (\ref{eq:threshold}) is equal to $Pr[ S(\mathsf{I}) \geq Z \mid y' ]$, where $\mathsf{I}$ is any specified innocent user.
Now for each $j \in [m]$, let $X_j$ be a random variable such that
\begin{equation}
\begin{cases}
Pr[ X_j = \log(1/p) ] = p \,,\, Pr[ X_j = 0 ] = 1 - p &\mbox{if } j \in A_{\mathrm{H}} \enspace,\\
Pr[ X_j = \log(1/(1-p)) ] = 1 - p \,,\, Pr[ X_j = 0 ] = p &\mbox{if } j \in A_{\mathrm{L}} \enspace.
\end{cases}
\end{equation}
Then, conditioned on this $y'$, the variables $X_1,\dots,X_m$ are independent and $S(\mathsf{I}) = m \overline{X}$.
Now by a direct calculation, we have $E[S(\mathsf{I}) \mid y'] = m E[\overline{X} \mid y'] = \mu$ where $\mu = a_{\mathrm{H}} p \log(1/p) + a_{\mathrm{L}} (1-p) \log(1/(1-p))$.
Moreover, we have $0 \leq X_j \leq \log(1/p)$ if $j \in A_{\mathrm{H}}$, and we have $0 \leq X_j \leq \log(1/(1-p))$ if $j \in A_{\mathrm{L}}$.
Hence Theorem \ref{thm:Hoeffding} implies that
\begin{equation}
\label{eq:proof_threshold_1}
Pr[ S(\mathsf{I}) - \mu \geq m t \mid y' ]
\leq \exp\left( \frac{ -2 m^2 t^2 }{ a_{\mathrm{H}}(\log(1/p))^2 + a_{\mathrm{L}}(\log(1/(1-p)))^2 } \right)
\end{equation}
for $t > 0$.
Now by setting $t = \eta / m$ where
\begin{equation}
\eta = \sqrt{ \frac{1}{2} \left( \left( \log\frac{1}{p} \right)^2 a_{\mathrm{H}} + \left( \log\frac{1}{1-p} \right)^2 a_{\mathrm{L}} \right) \log \frac{N}{\varepsilon_0} } \enspace,
\end{equation}
the right-hand side of (\ref{eq:proof_threshold_1}) is equal to $\varepsilon_0 / N$.
On the other hand, for the left-hand side of (\ref{eq:proof_threshold_1}), we have
\begin{equation}
Pr[ S(\mathsf{I}) - \mu \geq m t \mid y' ] = Pr[ S(\mathsf{I}) \geq \mu + \eta \mid y' ] \enspace,
\end{equation}
while the value of $Z = Z_0$ in (\ref{eq:threshold_example}) is equal to $\mu + \eta$.
Hence the condition (\ref{eq:threshold}) is satisfied, concluding the proof of Proposition \ref{prop:threshold}.

\subsection{Proof of Proposition \ref{prop:error_type}}
\label{subsec:proof_error_type}

To prove Proposition \ref{prop:error_type}, suppose that it is not the case of Type I--IV errors.
We show that tracing error does not occur in this case.
Recall that $T_{\mathrm{P}} = 123 \in \mathcal{T}(y')$.
By the absence of Type I error, it holds that either some pirate and no innocent users are output in Step \ref{item:tracing_score_check} of $\mathsf{Tr}$, or $S(i) < Z$ for every $i \in U$ and nobody is output in Step \ref{item:tracing_score_check}.
It suffices to consider the latter case.
We have $T_{\mathrm{P}} \in \mathcal{T}'$ by the absence of Type II error.
Hence every $T \in \mathcal{T}'$ intersects $T_{\mathrm{P}}$, and $\bigcap \mathcal{T}' \subseteq T_{\mathrm{P}}$.
By virtue of Step \ref{item:tracing_triple_intersection}, it suffices to consider the case that $\bigcap \mathcal{T}' = \emptyset$.
Now there are the following two cases: (A) we have $|T \cap T_{\mathrm{P}}| = 1$ for some $T \in \mathcal{T}'$; (B) we have $|T \cap T_{\mathrm{P}}| = 2$ for every $T \in \mathcal{T}' \setminus \{T_{\mathrm{P}}\}$.

\subsubsection{Case (A)}
\label{subsec:error_type_A}

Let $T_1 \in \mathcal{T}'$ and $|T_1 \cap T_{\mathrm{P}}| = 1$.
By symmetry, we may assume that $T_1 \cap T_{\mathrm{P}} = \{1\}$.
By the fact $\bigcap \mathcal{T}' = \emptyset$, there is a $T_2 \in \mathcal{T}'$ such that $1 \not\in T_2$.
We may assume by symmetry that $2 \in T_2$, as $T_2 \cap T_{\mathrm{P}} \neq \emptyset$.
We have $T_1 \cap T_2 \neq \emptyset$ as $T_1 \in \mathcal{T}'$, therefore the absence of Type III error implies that $3 \in T_2$.
Put $T_2 = 23\mathsf{I}$ with $\mathsf{I} \in U_{\mathrm{I}}$, and $T_1 = 1 \mathsf{I} \mathsf{I}'$ with $\mathsf{I}' \in U_{\mathrm{I}}$.
Now if we calculate the set $\mathcal{P}$ by using $\{T_{\mathrm{P}},T_1,T_2\}$ instead of $\mathcal{T}'$, then the result is
\begin{equation}
\label{eq:proof_error_type_A_setP}
\{12, 13, 1\mathsf{I}, 2\mathsf{I}, 2\mathsf{I}', 3\mathsf{I}, 3\mathsf{I}'\} \enspace.
\end{equation}
In general, the actual set $\mathcal{P}$ is included in the set (\ref{eq:proof_error_type_A_setP}).
Now we present two properties.
First, we show that $12,13 \in \mathcal{P}$.
Indeed, if $12 \not\in \mathcal{P}$, then we have $12 \cap T = \emptyset$ for some $t \in \mathcal{T}'$.
Now we have $3 \in T$ and $T_1 \cap T \neq \emptyset$ as $T \in \mathcal{T}'$, therefore $T_1$ and $T$ contradict the absence of Type III error.
Hence we have $12 \in \mathcal{P}$, and $13 \in \mathcal{P}$ by symmetry.
Secondly, we show that no innocent users are output in Step \ref{item:tracing_pair_multiplicity_1}.
Indeed, if an $\mathsf{I}'' \in U_{\mathrm{I}}$ is output in Step \ref{item:tracing_pair_multiplicity_1}, then the possibility of $\mathcal{P}$ mentioned above implies that $\mathsf{I}'' \in \{\mathsf{I},\mathsf{I}'\}$ and we have $i \in \mathcal{P}_1$ and $i \mathsf{I}'' \in \mathcal{P}$ for some $i \in 123$.
This is impossible, as $12,13 \in \mathcal{P}$.
Hence this claim holds, therefore it suffices to consider the case that nobody is output in Step \ref{item:tracing_pair_multiplicity_1}, namely $\mathcal{P}_1 = \emptyset$.

By these properties, we have either $2 \mathsf{I}',3 \mathsf{I}' \in \mathcal{P}$ or  $2 \mathsf{I}',3 \mathsf{I}' \not\in \mathcal{P}$ (otherwise $\mathsf{I}' \in \mathcal{P}_1$, a contradiction).
Similarly, we have $\mathcal{P} \cap \{2 \mathsf{I},2 \mathsf{I}'\} \neq \emptyset$ and $\mathcal{P} \cap \{3 \mathsf{I},3 \mathsf{I}'\} \neq \emptyset$.
First we consider the case that $2 \mathsf{I}', 3\mathsf{I}' \in \mathcal{P}$.
As $\mathcal{P}_1 = \emptyset$, it does not hold that $|\mathcal{P} \cap \{1 \mathsf{I},2 \mathsf{I},3 \mathsf{I}\}| \neq 1$.
If $1 \mathsf{I},2 \mathsf{I},3 \mathsf{I} \in \mathcal{P}$, then $|\mathcal{P}| = 7$, $\mathcal{P}_2 = \{\mathsf{I}'\}$, and $2$ and $3$ are output in Step \ref{item:tracing_pair_7}.
If $|\mathcal{P} \cap \{1 \mathsf{I},2 \mathsf{I},3 \mathsf{I}\}| = 2$, then $|\mathcal{P}| = 6$, $\emptyset \neq \mathcal{P}_3 \subseteq U_{\mathrm{P}}$ and a pirate is correctly output in Step \ref{item:tracing_pair_6}.
Finally, if $1 \mathsf{I},2 \mathsf{I},3 \mathsf{I} \not\in \mathcal{P}$, then $|\mathcal{P}| = 4$ and $\mathcal{P} = \{12,13,2\mathsf{I}',3\mathsf{I}'\}$.
Now $\mathsf{I}'$ is not output in Step \ref{item:tracing_pair_4}, as $123 \in \mathcal{T}'$.
Moreover, if none of $1$, $2$, and $3$ is output in Step \ref{item:tracing_pair_4}, then it should hold that $12\mathsf{I}', 13\mathsf{I}', 23\mathsf{I}' \in \mathcal{T}'$, contradicting the absence of Type IV error.
Hence a pirate is correctly output in Step \ref{item:tracing_pair_4}, concluding the proof in the case $2\mathsf{I}',3\mathsf{I}' \in \mathcal{P}$.

Secondly, we suppose that $2\mathsf{I}',3\mathsf{I}' \not\in \mathcal{P}$, therefore $2\mathsf{I},3\mathsf{I} \in \mathcal{P}$.
There are two possibilities $\mathcal{P} = \{12,13,2\mathsf{I},3\mathsf{I}\}$ and $\mathcal{P} = \{12,13,1\mathsf{I},2\mathsf{I},3\mathsf{I}\}$.
The former case is the same as the previous paragraph.
In the latter case, we have $|\mathcal{P}| = 5$, $\mathcal{T}'' \subseteq \{12\mathsf{I},13\mathsf{I}\}$ and $\mathcal{P}_2 = 23$.
Hence $2$ or $3$ is correctly output in Step \ref{item:tracing_pair_5_nonempty} when $\mathcal{T}'' \neq \emptyset$.
On the other hand, when $\mathcal{T}'' = \emptyset$, $2$ and $3$ are correctly output in Step \ref{item:tracing_pair_5_empty}.
Hence the proof in the case $2\mathsf{I}',3\mathsf{I}' \not\in \mathcal{P}$ (therefore in the case (A)) is concluded.

\subsubsection{Case (B)}
\label{subsec:error_type_B}

As $\bigcap \mathcal{T}' = \emptyset$, there are $\mathsf{I}_1,\mathsf{I}_2,\mathsf{I}_3 \in U_{\mathrm{I}}$ such that $12\mathsf{I}_3,13\mathsf{I}_2,\allowbreak 23\mathsf{I}_1 \in \mathcal{T}'$.
By the absence of Type IV error, it does not hold that $\mathsf{I}_1 = \mathsf{I}_2 = \mathsf{I}_3$.
By symmetry, we may assume that $\mathsf{I}_1 \neq \mathsf{I}_2$.
Then by calculating the set $\mathcal{P}$ by using $\{123,12\mathsf{I}_3,13\mathsf{I}_2,23\mathsf{I}_1\}$ instead of $\mathcal{T}'$, it follows that the actual $\mathcal{P}$ satisfies $\mathcal{P} \subseteq \{12,13,23,1\mathsf{I}_1,2\mathsf{I}_2,3\mathsf{I}_3\}$, while $12,13,23 \in \mathcal{P}$ by the assumption of the case (B).
If $\mathcal{P} = \{12,13,23\}$, then $1$, $2$ and $3$ are output in Step \ref{item:tracing_pair_3}.
Therefore it suffices to consider the case that $\{12,13,23\} \subsetneq \mathcal{P}$.

If $\mathsf{I}_1 \neq \mathsf{I}_3 \neq \mathsf{I}_2$, then we have $\emptyset \neq \mathcal{P}_1 \subseteq \mathsf{I}_1\mathsf{I}_2\mathsf{I}_3$ and a pirate is correctly output in Step \ref{item:tracing_pair_multiplicity_1}.
Hence it suffices to consider the remaining case.
By symmetry, we may assume that $\mathsf{I}_1 = \mathsf{I}_3 \neq \mathsf{I}_2$.
If $2 \mathsf{I}_2 \in \mathcal{P}$, then we have $\mathsf{I}_2 \in \mathcal{P}_1 \subseteq \mathsf{I}_1\mathsf{I}_2$, and $2$ is correctly output in Step \ref{item:tracing_pair_multiplicity_1}.
From now, we assume that $2\mathsf{I}_2 \not\in \mathcal{P}$.
If $1\mathsf{I}_1 \not\in \mathcal{P}$ or $3\mathsf{I}_1 \not\in \mathcal{P}$, then we have $\mathcal{P}_1 = \{\mathsf{I}_1\}$ as $\{12,13,23\} \subsetneq \mathcal{P}$, therefore $1$ or $3$ is correctly output in Step \ref{item:tracing_pair_multiplicity_1}.
On the other hand, if $1\mathsf{I}_1,3\mathsf{I}_1 \in \mathcal{P}$, then we have $\mathcal{P} = \{12,13,23,1\mathsf{I}_1,3\mathsf{I}_1\}$, while $13\mathsf{I}_1 \not\in \mathcal{T}'$ by the absence of Type IV error (note that $12\mathsf{I}_1,23\mathsf{I}_1 \in \mathcal{T}'$), therefore $\mathcal{T}'' = \{123\}$, $\mathcal{P}_2 = 2 \mathsf{I}_1$ and $2$ is correctly output in Step \ref{item:tracing_pair_5_nonempty}.
Hence the proof in the case (B), therefore the proof of Proposition \ref{prop:error_type}, is concluded.

\subsection{Proof of Proposition \ref{prop:probability_type_II}}
\label{subsec:proof_probability_type_II}

To prove Proposition \ref{prop:probability_type_II}, let $\mathsf{I}_1$, $\mathsf{I}_2$ and $\mathsf{I}_3$ be three distinct innocent users.
Given $y'$ and $\mathsf{st} = (p_j)_j$, we introduce the following notation for $j \in [m]$:
\begin{equation}
\label{eq:definition_two_bits}
\xi_j^{\mathrm{H}} = \begin{cases}
1 & \mbox{if } p_j = p \enspace, \\
0 & \mbox{if } p_j = 1-p \enspace,
\end{cases}
\xi_j^{\mathrm{L}} = 1 - \xi_j^{\mathrm{H}} \enspace.
\end{equation}
Note that the sets $A_{\sigma}$ for $\sigma \in \{\mathrm{H},\mathrm{L}\}$ defined in Sect.~\ref{sec:results} satisfy that $A_{\sigma} = \{j \mid y'_j = \xi_j^{\sigma}\}$.
We write $A_{\sigma} = A_{\sigma}(y',\mathsf{st})$ and $a_{\sigma} = |A_{\sigma}| = a_{\sigma}(y',\mathsf{st})$ when we emphasize the dependency on $y'$ and $\mathsf{st}$.
Then, as the bits of codewords are independently chosen, we have
\begin{equation}
Pr[\mathsf{I}_1\mathsf{I}_2\mathsf{I}_3 \in \mathcal{T}(y') \mid y',\mathsf{st}] = (1-p^3)^{a_{\mathrm{L}}}(1-(1-p)^3)^{a_{\mathrm{H}}} \enspace,
\end{equation}
therefore
\begin{equation}
\label{eq:proof_type_II_probability_triple}
Pr[\mathsf{I}_1\mathsf{I}_2\mathsf{I}_3 \in \mathcal{T}(y')]
= \sum_{y',\mathsf{st}} Pr[y',\mathsf{st}] (1-p^3)^{a_{\mathrm{L}}(y',\mathsf{st})}(1-(1-p)^3)^{a_{\mathrm{H}}(y',\mathsf{st})} \enspace.
\end{equation}
Now we present the following key lemma, which will be proven later:
\begin{lemma}
\label{lem:proof_type_II_majority}
Among the possible pirate strategies $\rho$, the maximum value of the right-hand side of (\ref{eq:proof_type_II_probability_triple}) is attained by the majority vote attack, namely the attack word $y$ for codewords $w_1,w_2,w_3$ of three pirates satisfies that $y_j = 0$ if at least two of $w_{1,j},w_{2,j},w_{3,j}$ are $0$ and $y_j = 1$ otherwise.
\end{lemma}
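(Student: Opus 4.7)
The plan is to reduce the maximization over all pirate strategies $\rho$ to a pointwise column-by-column optimization, and then verify by a short calculation that the majority vote is pointwise optimal at each column. First, I would rewrite the bracketed factor in (\ref{eq:proof_type_II_probability_triple}) as a product $\prod_{j=1}^{m} \phi_j(y'_j, p_j)$, where $\phi_j(y'_j, p_j) = 1 - (1-p)^3$ if $j \in A_{\mathrm{H}}$ and $\phi_j = 1 - p^3$ otherwise. Since the set of pirate strategies is closed under convex combinations and the objective is linear in the strategy, I may assume without loss of generality that $\rho$ is deterministic.

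Next, I would condition on the codeword matrix $W$. Given $W$, the attack word $y = \rho(W)$ is fixed, and the remaining randomness -- the biases $(p_j)_j$ together with the independent replacement of each erasure symbol -- factors across columns, because $p_j$ is independent of $W_{\cdot,k}$ for $k \ne j$ in the codeword generation and the erasure replacement is column-wise. Setting
\[
h_j(y, w) = E\bigl[\phi_j(y'_j, p_j) \,\bigm|\, y_j = y,\, W_{\cdot,j} = w\bigr],
\]
this yields
\[
E\!\left[\prod_j \phi_j(y'_j, p_j) \,\middle|\, W\right] = \prod_j h_j\bigl(\rho_j(W),\, W_{\cdot,j}\bigr).
\]
Taking the outer expectation over $W$ (which itself factors column-wise), the RHS of (\ref{eq:proof_type_II_probability_triple}) becomes $E\!\bigl[\prod_j h_j(\rho_j(W), W_{\cdot,j})\bigr]$, which is pointwise bounded above by $\prod_j E\!\bigl[\max_y h_j(y, W_{\cdot,j})\bigr]$, with $y$ ranging over symbols in $\{0,1,?\}$ consistent with Marking Assumption.

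Hence it suffices to show that for each column configuration $w$ the maximum of $h_j(y, w)$ is attained at $y = \mathrm{maj}(w)$. For undetectable columns ($w_1 = w_2 = w_3 = b$), Marking Assumption forces $y = b$, which equals the majority. For detectable columns, by the $0 \leftrightarrow 1$ symmetry it suffices to consider $w = (1,1,0)$. A Bayes computation gives $Pr[p_j = p \mid w = (1,1,0)] = p$, and a short calculation then yields
\[
h_j(1, w) - h_j(0, w) = (2p-1)\bigl(p^3 - (1-p)^3\bigr) \geq 0
\]
(using $p \geq 1/2$), together with $h_j(?, w) = h_j(1, w)$ (the two contributions from replacing `?' by $1$ or $0$ recombine to the $y=1$ value). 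Thus the majority bit indeed attains the column-wise maximum, and since the majority vote is itself a valid column-wise strategy that achieves this bound at every $W$, it realizes the overall maximum.

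The main obstacle I anticipate is the factorization step: carefully tracking the dependence structure among $p_j$, $W_{\cdot,j}$, $y_j$, and $y'_j$ after conditioning on $W$, and arguing that the resulting expectation genuinely factors column-wise even though $\rho_j(W)$ may depend on the entire matrix $W$. Once this is in place, the remaining pointwise check reduces to the polynomial identity above, which is straightforward.
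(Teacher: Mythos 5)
Your proof is correct and reaches the conclusion by a genuinely different route from the paper. The paper never factorizes the objective; instead it fixes the pirates' codewords and a single detectable column $j_0$ and runs a local exchange argument: pairing the configurations $(y'^0,\mathsf{st}^0),(y'^1,\mathsf{st}^1)$ against $(y'^1,\mathsf{st}^0),(y'^0,\mathsf{st}^1)$, it computes via Bayes that the posterior weight on the state placing $j_0$ into $A_{\mathrm{H}}$ is $p$ when $y_{j_0}$ is the majority bit, $1-p$ when it is the minority bit, and again $p$ for `$?$', and then concludes from $1-(1-p)^3 \ge 1-p^3$ that switching $y_{j_0}$ to the majority never decreases the sum. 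Your argument instead establishes the full column-wise product decomposition and optimizes each factor pointwise; the key ingredients are identical (the posterior value $p$, the identity $h_j(?,w)=h_j(1,w)$, and the sign of $(2p-1)(p^3-(1-p)^3)$), but your global factorization makes the claim that the maximum is \emph{attained} by a column-wise strategy immediate, whereas the paper's one-column-at-a-time swap avoids setting up the factorization machinery at the cost of a more delicate bookkeeping of conditional probabilities. One point to tighten: by the definition of the pirate tracing game, $\rho$ receives only the pirates' codewords, not the full matrix $W$, so the conditioning in $h_j$ must be on the triple $(w_{1,j},w_{2,j},w_{3,j})$ --- which is what your Bayes computation in fact assumes; if one literally conditions on the whole column $W_{\cdot,j}$, the posterior of $p_j$ is no longer $p$ and the pointwise claim that the pirate-majority bit attains $\max_y h_j(y,W_{\cdot,j})$ can fail, so the notation $y=\rho(W)$ and $W_{\cdot,j}=w$ should be replaced by the restriction to the pirates' rows throughout.
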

If $\rho$ is the majority vote attack, then for each $j \in [m]$, we have $j \in A_{\mathrm{H}}(y',\mathsf{st})$ (i.e., $\xi_j^{\mathrm{H}}$ becomes the majority in $w_{1,j},w_{2,j},w_{3,j}$) with probability $3p^2(1-p) + p^3 = 3p^2 - 2p^3$ and $j \in A_{\mathrm{L}}(y',\mathsf{st})$ with probability $1 - 3p^2 + 2p^3$.
This implies that
\begin{equation}
\begin{split}
&Pr[\mathsf{I}_1\mathsf{I}_2\mathsf{I}_3 \in \mathcal{T}(y')] \\
&= \sum_{\substack{\alpha_{\mathrm{L}},\alpha_{\mathrm{H}} \\ \alpha_{\mathrm{L}} + \alpha_{\mathrm{H}} = m}} Pr[a_{\mathrm{L}} = \alpha_{\mathrm{L}}, a_{\mathrm{H}} = \alpha_{\mathrm{H}}] (1-p^3)^{\alpha_{\mathrm{L}}}(1-(1-p)^3)^{\alpha_{\mathrm{H}}} \\
&= \sum_{\substack{\alpha_{\mathrm{L}},\alpha_{\mathrm{H}} \\ \alpha_{\mathrm{L}} + \alpha_{\mathrm{H}} = m}} \Biggl( \binom{m}{\alpha_{\mathrm{L}}} (1-3p^2+2p^3)^{\alpha_{\mathrm{L}}} (3p^2-2p^3)^{\alpha_{\mathrm{H}}} (1-p^3)^{\alpha_{\mathrm{L}}}(1-(1-p)^3)^{\alpha_{\mathrm{H}}} \Biggr) \\
&= \sum_{\substack{\alpha_{\mathrm{L}},\alpha_{\mathrm{H}} \\ \alpha_{\mathrm{L}} + \alpha_{\mathrm{H}} = m}} \Biggl( \binom{m}{\alpha_{\mathrm{L}}} (1-3p^2+p^3+3p^5-2p^6)^{\alpha_{\mathrm{L}}} (9p^3-15p^4+9p^5-2p^6)^{\alpha_{\mathrm{H}}} \Biggr) \\
&= (1 - 3p^2 + 10p^3 - 15p^4 + 12p^5 - 4p^6)^m = f_1(p)^m \enspace.
\end{split}
\end{equation}
By virtue of Lemma \ref{lem:proof_type_II_majority}, for a general $\rho$, $Pr[\mathsf{I}_1\mathsf{I}_2\mathsf{I}_3 \in \mathcal{T}(y')]$ is bounded by the right-hand side of the above equality.
This implies the claim of Proposition \ref{prop:probability_type_II}, as there are $\binom{N-3}{3}$ choices of the triple $\mathsf{I}_1,\mathsf{I}_2,\mathsf{I}_3$.

To complete the proof of Proposition \ref{prop:probability_type_II}, we give a proof of Lemma \ref{lem:proof_type_II_majority}.
\begin{proof}
[Proof of Lemma \ref{lem:proof_type_II_majority}]
Fix the codewords $w_1,w_2,w_3$ of the three pirates $1,2,3 \in U$.
Let $\vec{w}_{\mathrm{P}}$ denote the collection of those three codewords.
Let $j_0 \in [m]$ be the index of a detectable column.
By symmetry, we may assume without loss of generality that $w_{1,j_0} = w_{2,j_0} = 0$ and $w_{3,j_0} = 1$.
Now let $y^0$ be an arbitrary attack word such that $y^0_{j_0} = 0$, and let $y^1$ and $y^{?}$ be the attack words obtained from $y^0$ by changing the $j_0$-th column to $1$ and to ${?}$, respectively.
We show that if the pirate strategy $\rho$ for the input $\vec{w}_{\mathrm{p}}$ is modified so that it outputs $y^0$ instead of $y^1$ and $y^{?}$, then the right-hand side of (\ref{eq:proof_type_II_probability_triple}) will not decrease.
As $\vec{w}_{\mathrm{P}}$, $j_0$ and $y^0$ are arbitrarily chosen, the claim of Lemma \ref{lem:proof_type_II_majority} then follows.

Let $y'{}^0$ be an $m$-bit word such that $y'{}^0_j = y^0_j$ for any $j \in [m]$ with $y^0_j \neq {?}$, therefore $y'{}^0$ is obtained from $y^0$ in Step \ref{item:tracing_replace_?} in the tracing algorithm with positive probability.
Let $y'{}^1$ be the $m$-bit word obtained from $y'{}^0$ by changing the $j_0$-th column to $1$.
Moreover, let $\mathsf{st}^0 = (p_j)_j$ be any state information such that $p_{j_0} = 1 - p$, and let $\mathsf{st}^1$ be the state information obtained from $\mathsf{st}^0$ by changing the $j_0$-th component to $p$.

In this case, by independence of the columns, we have $Pr[\vec{w}_{\mathrm{P}} \mid \mathsf{st}^0] = \alpha p^2(1-p)$ and $Pr[\vec{w}_{\mathrm{P}} \mid \mathsf{st}^1] = \alpha p(1-p)^2$ for a common $\alpha > 0$.
As $Pr[\mathsf{st}^0] = Pr[\mathsf{st}^1] > 0$ and $Pr[\vec{w}_{\mathrm{P}}] > 0$, Bayes Theorem implies that $Pr[\mathsf{st}^0 \mid \vec{w}_{\mathrm{P}}] = \alpha' p^2(1-p)$ and $Pr[\mathsf{st}^1 \mid \vec{w}_{\mathrm{P}}] = \alpha' p(1-p)^2$ for a common $\alpha' > 0$, therefore
\begin{equation}
Pr[\mathsf{st}^0 \mid \vec{w}_{\mathrm{P}}, \mbox{ ($\mathsf{st}^0$ or $\mathsf{st}^1$) }] = \frac{ \alpha' p^2(1-p) }{ \alpha' p^2(1-p) + \alpha' p(1-p)^2 } = p
\end{equation}
and $Pr[\mathsf{st}^1 \mid \vec{w}_{\mathrm{P}}, \mbox{ ($\mathsf{st}^0$ or $\mathsf{st}^1$) }] = 1-p$.
Now there is a common $\beta > 0$ such that, for each $x \in \{0,1\}$,
\begin{equation}
\begin{split}
Pr[y'{}^0 \mid \mathsf{st}^x, y^0] &= Pr[y'{}^1 \mid \mathsf{st}^x, y^1] = \beta \enspace,\\
Pr[y'{}^0 \mid \mathsf{st}^x, y^1] &= Pr[y'{}^1 \mid \mathsf{st}^x, y^0] = 0 \enspace,\\
Pr[y'{}^0 \mid \mathsf{st}^0, y^{?}] &= Pr[y'{}^1 \mid \mathsf{st}^1, y^{?}] = \beta p \enspace,\\
Pr[y'{}^0 \mid \mathsf{st}^1, y^{?}] &= Pr[y'{}^1 \mid \mathsf{st}^0, y^{?}] = \beta (1-p) \enspace.
\end{split}
\end{equation}
As the choice of the attack word $y$ for given $\vec{w}_{\mathrm{P}}$ is independent of $\mathsf{st}$, and the choice of the word $y'$ will be independent of $\vec{w}_{\mathrm{P}}$ once the attack word $y$ is determined, it follows that
\begin{equation}
Pr[y'{}^x,\mathsf{st}^{x'} \mid \vec{w}_{\mathrm{P}}, \mbox{ ($\mathsf{st}^0$ or $\mathsf{st}^1$) }, y^{x''}] = Pr[\mathsf{st}^{x'} \mid \vec{w}_{\mathrm{P}}, \mbox{ ($\mathsf{st}^0$ or $\mathsf{st}^1$) }] Pr[y'{}^x \mid \mathsf{st}^{x'}, y^{x''}]
\end{equation}
for $x,x' \in \{0,1\}$ and $x'' \in \{0,1,?\}$.
By these relations, we have
\begin{equation}
\begin{split}
&Pr[(y'{}^0,\mathsf{st}^0) \mbox{ or } (y'{}^1,\mathsf{st}^1) \mid \vec{w}_{\mathrm{P}}, \mbox{ ($\mathsf{st}^0$ or $\mathsf{st}^1$) }, y^0]
= p \cdot \beta + (1-p) \cdot 0 = p \beta \enspace, \\
&Pr[(y'{}^1,\mathsf{st}^0) \mbox{ or } (y'{}^0,\mathsf{st}^1) \mid \vec{w}_{\mathrm{P}}, \mbox{ ($\mathsf{st}^0$ or $\mathsf{st}^1$) }, y^0]
= 1 - p \beta \enspace, \\
&Pr[(y'{}^0,\mathsf{st}^0) \mbox{ or } (y'{}^1,\mathsf{st}^1) \mid \vec{w}_{\mathrm{P}}, \mbox{ ($\mathsf{st}^0$ or $\mathsf{st}^1$) }, y^1]
= p \cdot 0 + (1-p) \cdot \beta = (1-p) \beta \enspace, \\
&Pr[(y'{}^1,\mathsf{st}^0) \mbox{ or } (y'{}^0,\mathsf{st}^1) \mid \vec{w}_{\mathrm{P}}, \mbox{ ($\mathsf{st}^0$ or $\mathsf{st}^1$) }, y^1]
= 1 - (1-p) \beta \enspace, \\
&Pr[(y'{}^0,\mathsf{st}^0) \mbox{ or } (y'{}^1,\mathsf{st}^1) \mid \vec{w}_{\mathrm{P}}, \mbox{ ($\mathsf{st}^0$ or $\mathsf{st}^1$) }, y^{?}]
= p \cdot \beta p + (1-p) \cdot \beta p = p \beta \enspace, \\
&Pr[(y'{}^1,\mathsf{st}^0) \mbox{ or } (y'{}^0,\mathsf{st}^1) \mid \vec{w}_{\mathrm{P}}, \mbox{ ($\mathsf{st}^0$ or $\mathsf{st}^1$) }, y^{?}]
= 1 - p \beta \enspace.
\end{split}
\end{equation}
Now note that $p \geq 1/2$, therefore we have $1 - p^3 \leq 1 - (1-p)^3$ and $p \beta \geq (1-p) \beta$.
Note also that $a_{\mathrm{H}}(y'{}^0,\mathsf{st}^0) = a_{\mathrm{H}}(y'{}^1,\mathsf{st}^1) = a_{\mathrm{H}}(y'{}^0,\mathsf{st}^1) + 1 = a_{\mathrm{H}}(y'{}^1,\mathsf{st}^0) + 1$.
This implies that, in the case $\mathsf{st} \in \{\mathsf{st}^0,\mathsf{st}^1\}$, if the pirate strategy $\rho$ for the input $\vec{w}_{\mathrm{p}}$ is modified in such a way that it outputs $y^0$ instead of $y^1$ and $y^{?}$, then the right-hand side of (\ref{eq:proof_type_II_probability_triple}) will not decrease.
As this property is in fact independent of the choice of $\mathsf{st}^0$ and $\mathsf{st}^1$, the claim in the proof follows, concluding the proof of Lemma \ref{lem:proof_type_II_majority}.
\end{proof}

\subsection{Proof of Proposition \ref{prop:probability_type_III}}
\label{subsec:proof_probability_type_III}

To prove Proposition \ref{prop:probability_type_III}, we fix an innocent user $\mathsf{I}_0 \in U_{\mathrm{I}}$ and consider the probability that there are $T_1,T_2 \in \mathcal{T}(y')$ such that $\mathsf{I}_0 \in T_1 \cap T_2 \subseteq U_{\mathrm{I}}$, $T_1 \cap T_{\mathrm{P}} = \{1\}$ and $T_2 \cap T_{\mathrm{P}} = \{2\}$; or equivalently, there are innocent users $\mathsf{I}_1,\mathsf{I}_2 \in U_{\mathrm{I}} \setminus \{\mathsf{I}_0\}$ such that $1 \mathsf{I}_0 \mathsf{I}_1 \in \mathcal{T}(y')$ and $2 \mathsf{I}_0 \mathsf{I}_2 \in \mathcal{T}(y')$.
We introduce some notations.
Given $y'$, $w_1$, $w_2$, $w_{\mathrm{I}_0}$, and $\mathsf{st} = (p_j)_j$, we define, for $\alpha,\beta,\gamma,\delta \in \{\mathrm{H},\mathrm{L}\}$,
\begin{equation}
a_{\alpha \beta \gamma \delta} = |\{j \in [m] \mid y'_j = \xi^{\alpha}_j, w_{1,j} = \xi^{\beta}_j, w_{2,j} = \xi^{\gamma}_j, w_{\mathrm{I}_0,j} = \xi^{\delta}_j\}|
\end{equation}
(see (\ref{eq:definition_two_bits}) for the notations).
Moreover, by using `$*$' as a wild-card, we extend naturally the definition of $a_{\alpha \beta \gamma \delta}$ to the case $\alpha,\beta,\gamma,\delta \in \{\mathrm{H},\mathrm{L},*\}$.
For example, we have $a_{\alpha * * \delta} = a_{\alpha \mathrm{H} \mathrm{H} \delta} + a_{\alpha \mathrm{H} \mathrm{L} \delta} + a_{\alpha \mathrm{L} \mathrm{H} \delta} + a_{\alpha \mathrm{L} \mathrm{L} \delta}$.
Note that $a_{x * * *}$ ($x \in \{\mathrm{H},\mathrm{L}\}$) is equal to the value $a_x$ in Sect.~\ref{sec:results}.

Now for an innocent user $\mathsf{I}_1 \neq \mathsf{I}_0$, we have
\begin{equation}
Pr[ 1 \mathsf{I}_0 \mathsf{I}_1 \in \mathcal{T}(y') \mid y',w_1,w_2,w_{\mathsf{I}_0},\mathsf{st} ] = p^{a_{\mathrm{HL}*\mathrm{L}}} (1-p)^{a_{\mathrm{LH}*\mathrm{H}}} \enspace.
\end{equation}
Therefore we have
\begin{equation}
Pr[ 1 \mathsf{I}_0 \mathsf{I}_1 \in \mathcal{T}(y') \mbox{ for some } \mathsf{I}_1 \in U_{\mathrm{I}} \mid y',w_1,w_2,w_{\mathsf{I}_0},\mathsf{st} ]
\leq (N-4) p^{a_{\mathrm{HL}*\mathrm{L}}} (1-p)^{a_{\mathrm{LH}*\mathrm{H}}}
\end{equation}
as there are $N-4$ choices of $\mathsf{I}_1$.
Similarly, we have
\begin{equation}
Pr[ 2 \mathsf{I}_0 \mathsf{I}_2 \in \mathcal{T}(y') \mbox{ for some } \mathsf{I}_2 \in U_{\mathrm{I}} \mid y',w_1,w_2,w_{\mathsf{I}_0},\mathsf{st} ]
\leq (N-4) p^{a_{\mathrm{H}*\mathrm{LL}}} (1-p)^{a_{\mathrm{L}*\mathrm{HH}}} \enspace.
\end{equation}
Hence the probability that $1 \mathsf{I}_0 \mathsf{I}_1,2 \mathsf{I}_0 \mathsf{I}_2 \in \mathcal{T}(y')$ for some $\mathsf{I}_1,\mathsf{I}_2 \in U_{\mathrm{I}}$, conditioned on the given $y'$, $w_1$, $w_2$, $w_{\mathsf{I}_0}$, and $\mathsf{st}$, is lower than the minimum of the two values $(N-4) p^{a_{\mathrm{HL}*\mathrm{L}}} (1-p)^{a_{\mathrm{LH}*\mathrm{H}}}$ and $(N-4) p^{a_{\mathrm{H}*\mathrm{LL}}} (1-p)^{a_{\mathrm{L}*\mathrm{HH}}}$, which is not higher than
\begin{equation}
\begin{split}
&\sqrt{ (N-4) p^{a_{\mathrm{HL}*\mathrm{L}}} (1-p)^{a_{\mathrm{LH}*\mathrm{H}}} \cdot (N-4) p^{a_{\mathrm{H}*\mathrm{LL}}} (1-p)^{a_{\mathrm{L}*\mathrm{HH}}} } \\
&= (N-4) \sqrt{p}^{\,a_{\mathrm{HL}*\mathrm{L}} + a_{\mathrm{H}*\mathrm{LL}}} \sqrt{1-p}^{\,a_{\mathrm{LH}*\mathrm{H}} + a_{\mathrm{L}*\mathrm{HH}}} \enspace.
\end{split}
\end{equation}

Now given $y'$, $w_1$, $w_2$, and $\mathsf{st}$, the probability that $w_{\mathsf{I}_0}$ attains the given values of $a_{\mathrm{HLLL}}$, $a_{\mathrm{HLHL}}$, $a_{\mathrm{HHLL}}$, $a_{\mathrm{LHLH}}$, $a_{\mathrm{LHHH}}$, and $a_{\mathrm{LLHH}}$ (denoted here by $\eta$) is the product of the following six values
\begin{equation}
\begin{split}
\binom{a_{\mathrm{HLL}*}}{a_{\mathrm{HLLL}}} (1-p)^{a_{\mathrm{HLLL}}}\allowbreak p^{a_{\mathrm{HLL}*} - a_{\mathrm{HLLL}}} \enspace,\enspace
\binom{a_{\mathrm{HLH}*}}{a_{\mathrm{HLHL}}} (1-p)^{a_{\mathrm{HLHL}}}p^{a_{\mathrm{HLH}*} - a_{\mathrm{HLHL}}} \enspace,\\
\binom{a_{\mathrm{HHL}*}}{a_{\mathrm{HHLL}}} (1-p)^{a_{\mathrm{HHLL}}}p^{a_{\mathrm{HHL}*} - a_{\mathrm{HHLL}}} \enspace,\enspace
\binom{a_{\mathrm{LHL}*}}{a_{\mathrm{LHLH}}} p^{a_{\mathrm{LHLH}}}(1-p)^{a_{\mathrm{LHL}*} - a_{\mathrm{LHLH}}} \enspace,\\
\binom{a_{\mathrm{LHH}*}}{a_{\mathrm{LHHH}}} p^{a_{\mathrm{LHHH}}}(1-p)^{a_{\mathrm{LHH}*} - a_{\mathrm{LHHH}}} \enspace,\enspace
\binom{a_{\mathrm{LLH}*}}{a_{\mathrm{LLHH}}} p^{a_{\mathrm{LLHH}}}(1-p)^{a_{\mathrm{LLH}*} - a_{\mathrm{LLHH}}} \enspace.
\end{split}
\end{equation}
By the above results, it follows that
\begin{equation}
\begin{split}
&Pr[ 1 \mathsf{I}_0 \mathsf{I}_1,2 \mathsf{I}_0 \mathsf{I}_2 \in \mathcal{T}(y') \mbox{ for some } \mathsf{I}_1,\mathsf{I}_2 \in U_{\mathrm{I}} \mid y',w_1,w_2,\mathsf{st} ] \\
&\leq \sum \eta (N-4) \sqrt{p}^{\,2 a_{\mathrm{HLLL}} + a_{\mathrm{HLHL}} + a_{\mathrm{HHLL}}} \sqrt{1-p}^{\,a_{\mathrm{LLHH}} + a_{\mathrm{LHLH}} + 2a_{\mathrm{LHHH}}} \enspace,
\end{split}
\end{equation}
where the sum runs over the possible values of $a_{\mathrm{HLLL}}$, $a_{\mathrm{HLHL}}$, $a_{\mathrm{HHLL}}$, $a_{\mathrm{LHLH}}$, $a_{\mathrm{LHHH}}$, and $a_{\mathrm{LLHH}}$.
Now by the above definition of $\eta$, the summand in the right-hand side is the product of $N-4$ and the following six values
\begin{equation}
\begin{split}
\binom{a_{\mathrm{HLL}*}}{a_{\mathrm{HLLL}}} (1-p)^{a_{\mathrm{HLLL}}}p^{a_{\mathrm{HLL}*}} \enspace,\enspace
\binom{a_{\mathrm{HLH}*}}{a_{\mathrm{HLHL}}} \left( (1-p)\sqrt{p} \right)^{a_{\mathrm{HLHL}}}p^{a_{\mathrm{HLH}*} - a_{\mathrm{HLHL}}} \enspace,\\
\binom{a_{\mathrm{HHL}*}}{a_{\mathrm{HHLL}}} \left( (1-p)\sqrt{p} \right)^{a_{\mathrm{HHLL}}}p^{a_{\mathrm{HHL}*} - a_{\mathrm{HHLL}}} \enspace,\enspace
\binom{a_{\mathrm{LHL}*}}{a_{\mathrm{LHLH}}} \left( p\sqrt{1-p} \right)^{a_{\mathrm{LHLH}}}(1-p)^{a_{\mathrm{LHL}*} - a_{\mathrm{LHLH}}} \enspace,\\
\binom{a_{\mathrm{LHH}*}}{a_{\mathrm{LHHH}}} p^{a_{\mathrm{LHHH}}}(1-p)^{a_{\mathrm{LHH}*}} \enspace,\enspace
\binom{a_{\mathrm{LLH}*}}{a_{\mathrm{LLHH}}} \left( p\sqrt{1-p} \right)^{a_{\mathrm{LLHH}}}(1-p)^{a_{\mathrm{LLH}*} - a_{\mathrm{LLHH}}} \enspace.
\end{split}
\end{equation}
Then by the binomial theorem, the sum is equal to
\begin{equation}
\label{eq:proof_type_III_bound_1}
\begin{split}
&(N-4) \left( p(2-p) \right)^{a_{\mathrm{HLL}*}} \left( p + (1-p)\sqrt{p} \right)^{a_{\mathrm{HLH}*} + a_{\mathrm{HHL}*}} \\
&\cdot \left( 1-p + p\sqrt{1-p} \right)^{a_{\mathrm{LHL}*} + a_{\mathrm{LLH}*}} \left( (1-p)(1+p) \right)^{a_{\mathrm{LHH}*}} \enspace.
\end{split}
\end{equation}

Given $y'$, $\mathsf{st}$, $w_1$, $w_2$, and $w_3$, we define, for $\alpha,\beta,\gamma,\delta \in \{\mathrm{H},\mathrm{L}\}$,
\begin{equation}
b_{\alpha \beta \gamma \delta} = |\{j \in [m] \mid y'_j = \xi^{\alpha}_j, w_{1,j} = \xi^{\beta}_j, w_{2,j} = \xi^{\gamma}_j, w_{3,j} = \xi^{\delta}_j\}| \enspace.
\end{equation}
Then by Marking Assumption, (\ref{eq:proof_type_III_bound_1}) is equal to
\begin{equation}
\label{eq:proof_type_III_bound_2}
\begin{split}
&(N-4) (2p - p^2)^{b_{\mathrm{HLLH}}}
\left( p + (1-p)\sqrt{p} \right)^{b_{\mathrm{HLHL}} + b_{\mathrm{HLHH}} + b_{\mathrm{HHLL}} + b_{\mathrm{HHLH}}} \\
&\cdot \left( 1-p + p\sqrt{1-p} \right)^{b_{\mathrm{LHLL}} + b_{\mathrm{LHLH}} + b_{\mathrm{LLHL}} + b_{\mathrm{LLHH}}}
(1-p^2)^{b_{\mathrm{LHHL}}} \\
&= (N-4) (2p - p^2)^{b_{\mathrm{HLLH}}}(1-p^2)^{b_{\mathrm{LHHL}}}
\left( p + (1-p)\sqrt{p} \right)^{b_{\mathrm{HLHL}} + b_{\mathrm{HHLH}}} \\
&\quad\cdot \left(1-p + p\sqrt{1-p} \right)^{b_{\mathrm{LHLH}} + b_{\mathrm{LLHL}}}
\left( p + (1-p)\sqrt{p} \right)^{b_{\mathrm{HLHH}} + b_{\mathrm{HHLL}}}
\left( 1-p + p\sqrt{1-p} \right)^{b_{\mathrm{LHLL}} + b_{\mathrm{LLHH}}} \enspace.
\end{split}
\end{equation}
By writing the right-hand side of (\ref{eq:proof_type_III_bound_2}) as $\eta'$, it follows that
\begin{equation}
\label{eq:proof_type_III_bound_3}
Pr[ 1 \mathsf{I}_0 \mathsf{I}_1,2 \mathsf{I}_0 \mathsf{I}_2 \in \mathcal{T}(y') \mbox{ for some } \mathsf{I}_1,\mathsf{I}_2 \in U_{\mathrm{I}} \mid w_1,w_2,w_3 ]
\leq \sum_{y',\mathsf{st}} Pr[y',\mathsf{st} \mid w_1,w_2,w_3] \eta' \enspace.
\end{equation}
Now we present the following key lemma, which will be proven later:
\begin{lemma}
\label{lem:proof_type_III_majority}
Among the possible pirate strategies $\rho$, the maximum value of the right-hand side of (\ref{eq:proof_type_III_bound_3}) is attained by majority vote attack $\rho_{\mathrm{maj}}$ (cf., Lemma \ref{lem:proof_type_II_majority}).
\end{lemma}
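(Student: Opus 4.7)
The plan is to imitate the local-swap argument used for Lemma \ref{lem:proof_type_II_majority}. Fix the pirate codewords $\vec{w}_{\mathrm{P}}=(w_1,w_2,w_3)$, a detectable column $j_0$, and any attack word $y^0$ whose $j_0$-th entry coincides with the majority bit $m$ of $\{w_{1,j_0},w_{2,j_0},w_{3,j_0}\}$; let $y^1$ and $y^?$ denote the attack words obtained from $y^0$ by changing its $j_0$-th entry to $1-m$ and to ${?}$, respectively. I would show that modifying $\rho$ on input $\vec{w}_{\mathrm{P}}$ so as to output $y^0$ in place of $y^1$ or $y^?$ does not decrease the right-hand side of (\ref{eq:proof_type_III_bound_3}). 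Iterating this replacement over every detectable column then transforms an arbitrary $\rho$ into $\rho_{\mathrm{maj}}$ and establishes the lemma.

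The Bayesian part of the previous proof carries over with no essential change. Conditioned on $\vec{w}_{\mathrm{P}}$ and $\mathsf{st}\in\{\mathsf{st}^0,\mathsf{st}^1\}$, a direct check for each of the three possible minority patterns at $j_0$ still yields $Pr[\mathsf{st}^0\mid\cdots]=p$ and $Pr[\mathsf{st}^1\mid\cdots]=1-p$, and the values of $Pr[y'{}^x\mid\mathsf{st}^{x'},y^{x''}]$ are the ones computed in Sect.~\ref{subsec:proof_probability_type_II}. The genuinely new step is the per-column analysis of $\eta'$: the product in (\ref{eq:proof_type_III_bound_2}) factors over $j\in[m]$, so the effect of the swap at $j_0$ reduces to comparing the expected single-column factor under $y^0$ against the expected single-column factors under $y^1$ and $y^?$. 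I would tabulate, for each of the four combinations $(y'{}^x,\mathsf{st}^{x'})$, the $b$-index into which column $j_0$ falls and read off the corresponding factor from (\ref{eq:proof_type_III_bound_2}).

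Concretely, when pirate $3$ supplies the minority bit at $j_0$ (say $w_{1,j_0}=w_{2,j_0}=0$ and $w_{3,j_0}=1$), both $(y'{}^0,\mathsf{st}^{x'})$ entries land on $b$-indices that are absent from $\eta'$ (single-column factor $1$), while the $(y'{}^1,\mathsf{st}^{x'})$ entries contribute $1-p^2$ and $2p-p^2$; the two required inequalities reduce to $1\geq 3p(1-p)$ and $1\geq 1-p(1-p)+2p^2(1-p)^2$, both elementary on $p\in[1/2,1)$. When pirate $1$ or pirate $2$ supplies the minority bit, the four entries distribute among $b$-indices with factors $p+(1-p)\sqrt{p}$ and $1-p+p\sqrt{1-p}$; a short calculation shows that the $y^?$-average coincides identically with the $y^0$-average, while the gap between the $y^0$-average and the $y^1$-average factors as $(2p-1)\bigl[(2p-1)+(1-p)\sqrt{p}-p\sqrt{1-p}\bigr]$, which is nonnegative on $p\in[1/2,1)$.

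The main obstacle I anticipate is that, unlike in Lemma \ref{lem:proof_type_II_majority} where $\eta$ is symmetric among the three pirates, the Type III expression $\eta'$ treats pirate $3$ differently from pirates $1$ and $2$ (the partners of the fixed innocent user $\mathsf{I}_0$). Consequently the local-swap analysis at $j_0$ genuinely splits into three minority-pattern subcases rather than collapsing into one, and each subcase demands its own small elementary verification before the factorization-and-averaging conclusion of Lemma \ref{lem:proof_type_II_majority} can be invoked.
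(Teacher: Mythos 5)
Your proposal is correct and follows essentially the same route as the paper's proof: a column-by-column swap argument that reuses the Bayesian weights $p$ and $1-p$ from the proof of Lemma \ref{lem:proof_type_II_majority}, splits into the three minority-pattern subcases forced by the asymmetric roles of pirate $3$ versus pirates $1,2$ in $\eta'$, and reduces each subcase to an elementary inequality (your identified factors $1$, $2p-p^2$, $1-p^2$, $p+(1-p)\sqrt{p}$, $1-p+p\sqrt{1-p}$ and the resulting comparisons all match the paper's). The only cosmetic difference is that for the pirate-$3$-minority subcase the paper argues pointwise (each relevant factor is $<1$, so $\eta'$ itself increases) rather than via your expected-factor comparison against $3p(1-p)$; both are valid.
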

By (\ref{eq:proof_type_III_bound_3}), we have
\begin{equation}
\begin{split}
Pr[ 1 \mathsf{I}_0 \mathsf{I}_1,2 \mathsf{I}_0 \mathsf{I}_2 \in \mathcal{T}(y') \mbox{ for some } \mathsf{I}_1,\mathsf{I}_2 \in U_{\mathrm{I}} ]
&\leq \sum_{w_1,w_2,w_3} Pr[w_1,w_2,w_3] \sum_{y',\mathsf{st}} Pr[y',\mathsf{st} \mid w_1,w_2,w_3] \eta' \\
&= \sum_{y',\mathsf{st},w_1,w_2,w_3} Pr[y',\mathsf{st},w_1,w_2,w_3] \eta' \enspace.
\end{split}
\end{equation}
By virtue of Lemma \ref{lem:proof_type_III_majority}, the maximum value of the right-hand side is attained by majority vote attack $\rho_{\mathrm{maj}}$.
Now for $\rho = \rho_{\mathrm{maj}}$, the word $y'$ is uniquely determined by $w_1$, $w_2$, and $w_3$, and we have $b_{\mathrm{HLLH}} = b_{\mathrm{LHHL}} = b_{\mathrm{HLHL}} = b_{\mathrm{LHLH}} = b_{\mathrm{HHLL}} = b_{\mathrm{LLHH}} = 0$, $b_{\mathrm{HHLH}} = d_{\mathrm{HLH}}$, $b_{\mathrm{LLHL}} = d_{\mathrm{LHL}}$, $b_{\mathrm{HLHH}} = d_{\mathrm{LHH}}$, and $b_{\mathrm{LHLL}} = d_{\mathrm{HLL}}$, where, for $\alpha,\beta,\gamma \in \{\mathrm{H},\mathrm{L}\}$,
\begin{equation}
d_{\alpha \beta \gamma} = |\{j \in [m] \mid w_{1,j} = \xi^{\alpha}_j, w_{2,j} = \xi^{\beta}_j, w_{3,j} = \xi^{\gamma}_j\}| \enspace.
\end{equation}
This implies that
\begin{equation}
\eta'
= (N-4) \left( p + (1-p)\sqrt{p} \right)^{d_{\mathrm{LHH}} + d_{\mathrm{HLH}}}
\left( 1-p + p\sqrt{1-p} \right)^{d_{\mathrm{HLL}} + d_{\mathrm{LHL}}} \enspace.
\end{equation}
Put $d_{\mathrm{other}} = m - d_{\mathrm{HLL}} - d_{\mathrm{LHL}} - d_{\mathrm{LHH}} - d_{\mathrm{HLH}}$.
Now given $\mathsf{st}$, the probability that $w_1$, $w_2$ and $w_3$ attain the given values of $d_{\mathrm{HLL}}$, $d_{\mathrm{LHL}}$, $d_{\mathrm{LHH}}$ and $d_{\mathrm{HLH}}$ is
\begin{equation}
\binom{m}{d_{\mathrm{HLL}},d_{\mathrm{LHL}},d_{\mathrm{LHH}},d_{\mathrm{HLH}},d_{\mathrm{other}}} (p(1-p)^2)^{d_{\mathrm{HLL}}+d_{\mathrm{LHL}}}
(p^2(1-p))^{d_{\mathrm{LHH}}+d_{\mathrm{HLH}}} (1-2p(1-p))^{d_{\mathrm{other}}}
\end{equation}
which is independent of $\mathsf{st}$.
This implies that
\begin{equation}
\begin{split}
&\sum_{y',\mathsf{st},w_1,w_2,w_3} Pr[y',\mathsf{st},w_1,w_2,w_3] \eta' \\
&= \sum \binom{m}{d_{\mathrm{HLL}},d_{\mathrm{LHL}},d_{\mathrm{LHH}},d_{\mathrm{HLH}},d_{\mathrm{other}}} (N-4)
\left( p(1-p)^2(1-p + p\sqrt{1-p}) \right)^{d_{\mathrm{HLL}}+d_{\mathrm{LHL}}} \\
&\quad\cdot \left( p^2(1-p)(p + (1-p)\sqrt{p}) \right)^{d_{\mathrm{LHH}}+d_{\mathrm{HLH}}}
\left( 1-2p(1-p) \right)^{d_{\mathrm{other}}}
\end{split}
\end{equation}
(where the sum runs over the possible values of $d_{\mathrm{HLL}}$, $d_{\mathrm{LHL}}$, $d_{\mathrm{LHH}}$, and $d_{\mathrm{HLH}}$)
\begin{equation}
\begin{split}
&= \sum \binom{m}{d_{--\mathrm{L}},d_{--\mathrm{H}},d_{\mathrm{other}}} (N-4)
\left( p(1-p)^{5/2}(p + \sqrt{1-p}) \right)^{d_{--\mathrm{L}}} \\
&\quad\cdot \left( p^{5/2}(1-p)(1-p + \sqrt{p}) \right)^{d_{--\mathrm{H}}}
\left( 1-2p+2p^2 \right)^{d_{\mathrm{other}}}
\end{split}
\end{equation}
(where the sum runs over the possible values of $d_{--\mathrm{L}} = d_{\mathrm{HLL}} + d_{\mathrm{LHL}}$ and $d_{--\mathrm{H}} = d_{\mathrm{LHH}} + d_{\mathrm{HLH}}$)
\begin{equation}
= (N-4) \Bigl( p(1-p)^{5/2}(p + \sqrt{1-p})
+ p^{5/2}(1-p)(1-p + \sqrt{p}) + 1-2p+2p^2 \Bigr)^m
= (N-4) f_2(p)^m \enspace.
\end{equation}
By the above argument, the value $Pr[ 1 \mathsf{I}_0 \mathsf{I}_1,2 \mathsf{I}_0 \mathsf{I}_2 \in \mathcal{T}(y') \allowbreak \mbox{for some } \mathsf{I}_1,\mathsf{I}_2 \in U_{\mathrm{I}} ]$ for a general $\rho$ is also bounded by the above value.
Hence Proposition \ref{prop:probability_type_III} follows, by considering the number of choices of the pair $1,2$ and the innocent user $\mathsf{I}_0$.

To complete the proof of Proposition \ref{prop:probability_type_III}, we give a proof of Lemma \ref{lem:proof_type_III_majority}.
\begin{proof}
[Proof of Lemma \ref{lem:proof_type_III_majority}]
First, note that $1/2 \leq p < 1$, therefore $0 < 2p - p^2 < 1$, $0 < 1 - p^2 < 1$ and $0 < 1-p + p\sqrt{1-p} \leq p + (1-p) \sqrt{p} < 1$.
Now by the definition (\ref{eq:proof_type_III_bound_2}) of $\eta'$, for each $j \in [m]$ such that $w_{1,j} = w_{2,j} \neq w_{3,j}$, the value of $\eta'$ is increased by setting the $j$-th bit of the attack word $y$ to be $w_{1,j}$ instead of $w_{3,j}$ or `$?$' (which makes the values of $b_{\mathrm{HLLH}}$ and $b_{\mathrm{LHHL}}$ smaller).

We consider the case that $w_{1,j} = w_{3,j} \neq w_{2,j}$.
If $w_{1,j} = \xi^{\mathrm{H}}_j$, then the contribution of the $j$-th column to the value $\eta'$ is $p + (1-p)\sqrt{p}$ when $y'_j = w_{1,j}$ and $1-p + p\sqrt{1-p}$ when $y'_j = w_{2,j}$.
On the other hand, if $w_{1,j} = \xi^{\mathrm{L}}_j$, then the contribution of the $j$-th column to the value $\eta'$ is $1-p + p\sqrt{1-p}$ when $y'_j = w_{1,j}$ and $p + (1-p)\sqrt{p}$ when $y'_j = w_{2,j}$.
Recall the relation $1-p + p\sqrt{1-p} \leq p + (1-p) \sqrt{p}$.
Now the same argument as Lemma \ref{lem:proof_type_II_majority} implies that $Pr[w_{1,j} = \xi^{\mathrm{H}}_j] = p \geq 1-p = Pr[w_{1,j} = \xi^{\mathrm{L}}_j]$ in this case.
This implies that the value of the right-hand side of (\ref{eq:proof_type_III_bound_3}) is not decreased by setting $y'_j$ to be $w_{1,j}$ instead of $w_{2,j}$ (the detail of the proof is similar to the proof of Lemma \ref{lem:proof_type_II_majority}).
Similarly, in the case that $w_{1,j} \neq w_{2,j} = w_{3,j}$, the value of the right-hand side of (\ref{eq:proof_type_III_bound_3}) is not decreased by setting $y'_j$ to be $w_{2,j}$ instead of $w_{1,j}$.

Summarizing, the value of the right-hand side of (\ref{eq:proof_type_III_bound_3}) is not decreased by setting $y'_j$ to be the majority of $w_{1,j}$, $w_{2,j}$, and $w_{3,j}$, instead of the minority of them.
Hence the maximum value of the right-hand side of (\ref{eq:proof_type_III_bound_3}) is attained by the majority vote attack, concluding the proof of Lemma \ref{lem:proof_type_III_majority}.
\end{proof}

\subsection{Proof of Proposition \ref{prop:probability_type_IV}}
\label{subsec:proof_probability_type_IV}

To prove Proposition \ref{prop:probability_type_IV}, we fix an innocent user $\mathsf{I}$ and suppose that $S(i) < Z$ for every $i \in 123$.
Given $y'$, $w_1$, $w_2$, $w_3$, and $\mathsf{st}$, we define, for $\alpha,\beta,\gamma,\delta \in \{\mathrm{H},\mathrm{L}\}$,
\begin{equation}
a_{\alpha \beta \gamma \delta} = |\{j \in [m] \mid y'_j = \xi^{\alpha}_j, w_{1,j} = \xi^{\beta}_j, w_{2,j} = \xi^{\gamma}_j, w_{3,j} = \xi^{\delta}_j\}| \enspace.
\end{equation}
Then we have
\begin{equation}
\label{eq:proof_type_IV_bound_1}
Pr[ 12\mathsf{I},13\mathsf{I},23\mathsf{I} \in \mathcal{T}(y') \mid y',w_1,w_2,w_3,\mathsf{st} ]
= p^{a_{\mathrm{HLLH}} + a_{\mathrm{HLHL}} + a_{\mathrm{HHLL}}} (1-p)^{a_{\mathrm{LLHH}} + a_{\mathrm{LHLH}} + a_{\mathrm{LHHL}}} \enspace.
\end{equation}
Let $a_{\mathrm{L}}$ and $a_{\mathrm{H}}$ be as defined in Sect.~\ref{sec:results}.
For $x \in \{\mathrm{L},\mathrm{H}\}$, let $a^{\mathrm{u}}_x$ and $a^{\mathrm{d}}_x$ be the number of indices $j \in [m]$ of undetectable and detectable columns, respectively, such that $y'_j = \xi^{x}_j$.
Note that $a_{\mathrm{H}} = a^{\mathrm{u}}_{\mathrm{H}} + a^{\mathrm{d}}_{\mathrm{H}}$, while we have $a^{\mathrm{u}}_{\mathrm{H}} = a_{\mathrm{HHHH}}$ and $a^{\mathrm{u}}_{\mathrm{L}} = a_{\mathrm{LLLL}}$ by Marking Assumption.
Now we have
\begin{equation}
\begin{split}
&S(1) + S(2) + S(3) \\
&= \Bigl(3 a_{\mathrm{HHHH}} + 2(a_{\mathrm{HLHH}} + a_{\mathrm{HHLH}} + a_{\mathrm{HHHL}}) + a_{\mathrm{HLLH}} + a_{\mathrm{HLHL}} + a_{\mathrm{HHLL}} \Bigr) \log \frac{1}{p} \\
&\quad + \Bigl(3 a_{\mathrm{LLLL}} + 2(a_{\mathrm{LLLH}} + a_{\mathrm{LLHL}} + a_{\mathrm{LHLL}}) + a_{\mathrm{LLHH}} + a_{\mathrm{LHLH}} + a_{\mathrm{LHHL}} \Bigr) \log \frac{1}{1-p} \\
&= a^{\mathrm{u}}_{\mathrm{H}} \log \frac{1}{p} + a^{\mathrm{u}}_{\mathrm{L}} \log \frac{1}{1-p} + 2 \left( a_{\mathrm{H}} \log \frac{1}{p} + a_{\mathrm{L}} \log \frac{1}{1-p} \right) \\
&\quad - (a_{\mathrm{HLLH}} + a_{\mathrm{HLHL}} + a_{\mathrm{HHLL}}) \log \frac{1}{p} - (a_{\mathrm{LLHH}} + a_{\mathrm{LHLH}} + a_{\mathrm{LHHL}}) \log \frac{1}{1-p} \enspace,
\end{split}
\end{equation}
therefore
\begin{equation}
\begin{split}
&(a_{\mathrm{HLLH}} + a_{\mathrm{HLHL}} + a_{\mathrm{HHLL}}) \log\frac{1}{p}
+ (a_{\mathrm{LLHH}} + a_{\mathrm{LHLH}} + a_{\mathrm{LHHL}}) \log\frac{1}{1-p} \\
&= 2 \left( a_{\mathrm{H}} \log \frac{1}{p} + a_{\mathrm{L}} \log \frac{1}{1-p} \right) + a^{\mathrm{u}}_{\mathrm{H}} \log\frac{1}{p}
+ a^{\mathrm{u}}_{\mathrm{H}} \log\frac{1}{1-p} - S(1) - S(2) - S(3) \\
&> 2 \left( a_{\mathrm{H}} \log \frac{1}{p} + a_{\mathrm{L}} \log \frac{1}{1-p} \right) + a^{\mathrm{u}}_{\mathrm{H}} \log\frac{1}{p}
+ a^{\mathrm{u}}_{\mathrm{L}} \log\frac{1}{1-p} - 3 Z_0
\end{split}
\end{equation}
where we used the assumptions that $S(i) < Z$ for every $i \in 123$ and $Z \leq Z_0$.
By using the relation $a_{\mathrm{L}} = m - a_{\mathrm{H}}$ and the definition (\ref{eq:threshold_example}) of $Z_0$, the right-hand side of the above inequality is equal to
\begin{equation}
\begin{split}
&\ (3p-1) m \log\frac{1}{1-p}
+ a_{\mathrm{H}} \left( (2-3p) \log\frac{1}{p} + (1-3p) \log\frac{1}{1-p} \right) \\
&\quad + a^{\mathrm{u}}_{\mathrm{H}} \log\frac{1}{p} + a^{\mathrm{u}}_{\mathrm{L}} \log\frac{1}{1-p}
- 3 \sqrt{ \frac{1}{2} \left( \left( \log\frac{1}{p} \right)^2 a_{\mathrm{H}} + \left( \log\frac{1}{1-p} \right)^2 a_{\mathrm{L}} \right) \log \frac{N}{\varepsilon_0} } \\
&= (3p-1) m \log\frac{1}{1-p} + a^{\mathrm{u}}_{\mathrm{L}} \log\frac{1}{1-p}
+ a^{\mathrm{u}}_{\mathrm{H}} \left( (3-3p) \log\frac{1}{p} + (1-3p) \log\frac{1}{1-p} \right) \\
&\quad + a^{\mathrm{d}}_{\mathrm{H}} \left( (2-3p) \log\frac{1}{p} + (1-3p) \log\frac{1}{1-p} \right) \\
&\quad - 3 \Biggl( \frac{1}{2} \Biggl( \left( \log\frac{1}{1-p} \right)^2 m - \left( \left( \log\frac{1}{1-p} \right)^2 - \left( \log\frac{1}{p} \right)^2 \right) a_{\mathrm{H}} \Biggr) \log \frac{N}{\varepsilon_0} \Biggr)^{1/2}
\end{split}
\end{equation}
(where we used the relation $a_{\mathrm{H}} = a^{\mathrm{u}}_{\mathrm{H}} + a^{\mathrm{d}}_{\mathrm{H}}$)
\begin{equation}
\begin{split}
&\geq (3p-1) m \log\frac{1}{1-p}
+ a^{\mathrm{u}}_{\mathrm{H}} \left( (3-3p) \log\frac{1}{p} + (1-3p) \log\frac{1}{1-p} \right) \\
&\quad + a^{\mathrm{d}}_{\mathrm{H}} \left( (2-3p) \log\frac{1}{p} + (1-3p) \log\frac{1}{1-p} \right)
+ a^{\mathrm{u}}_{\mathrm{L}} \log\frac{1}{1-p} - 3 \sqrt{ \frac{1}{2} m \log \frac{N}{\varepsilon_0} } \log \frac{1}{1-p}
\end{split}
\end{equation}
(where we used the fact $\log(1/(1-p)) \geq \log(1/p) > 0$).
By applying the above inequalities to (\ref{eq:proof_type_IV_bound_1}), we have
\begin{equation}
\label{eq:proof_type_IV_bound_2}
\begin{split}
&Pr[ 12\mathsf{I},13\mathsf{I},23\mathsf{I} \in \mathcal{T}(y') \mid y',w_1,w_2,w_3,\mathsf{st} ] \\
&< (1-p)^{(3p-1)m} (1-p)^{-3 \sqrt{ (m/2) \log(N/\varepsilon_0) } }
\left( p^{3-3p} (1-p)^{1-3p} \right)^{a^{\mathrm{u}}_{\mathrm{H}}} (1-p)^{a^{\mathrm{u}}_{\mathrm{L}}} \left( p^{2-3p} (1-p)^{1-3p} \right)^{a^{\mathrm{d}}_{\mathrm{H}}} \enspace.
\end{split}
\end{equation}
We write the right-hand side of (\ref{eq:proof_type_IV_bound_2}) as $\eta$.
Then we have
\begin{equation}
\label{eq:proof_type_IV_bound_3}
\begin{split}
Pr[ 12\mathsf{I},13\mathsf{I},23\mathsf{I} \in \mathcal{T}(y') \mid w_1,w_2,w_3 ]
&< \sum_{\substack{y',\mathsf{st} \\ S(1),S(2),S(3) < Z}} Pr[y',\mathsf{st} \mid w_1,w_2,w_3 ] \eta \\
&\leq \sum_{y',\mathsf{st}} Pr[y',\mathsf{st} \mid w_1,w_2,w_3 ] \eta \enspace.
\end{split}
\end{equation}
Now we present the following key lemma, which will be proven later:
\begin{lemma}
\label{lem:proof_type_IV_majority}
Among the possible pirate strategies $\rho$, the maximum value of the right-hand side of (\ref{eq:proof_type_IV_bound_3}) is attained by majority vote attack $\rho_{\mathrm{maj}}$ (cf., Lemma \ref{lem:proof_type_II_majority}).
\end{lemma}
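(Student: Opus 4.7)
The plan is to adapt the column-by-column exchange argument used in the proof of Lemma~\ref{lem:proof_type_II_majority}. Inspecting the definition of $\eta$ from (\ref{eq:proof_type_IV_bound_2}), I would first note that the factors $(1-p)^{(3p-1)m}$ and $(1-p)^{-3\sqrt{(m/2)\log(N/\varepsilon_0)}}$ are absolute constants, and the factor $(p^{3-3p}(1-p)^{1-3p})^{a^{\mathrm{u}}_{\mathrm{H}}}(1-p)^{a^{\mathrm{u}}_{\mathrm{L}}}$ depends only on the undetectable columns, whose $\mathrm{H}/\mathrm{L}$ classification is determined by $\vec{w}_{\mathrm{P}}$ and $\mathsf{st}$ alone (Marking Assumption forces $y'_j$ to agree with the common pirate bit). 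Hence only $(p^{2-3p}(1-p)^{1-3p})^{a^{\mathrm{d}}_{\mathrm{H}}}$ is sensitive to the attack strategy, and the exchange argument need only bound its contribution at each detectable column.

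Next, fix $\vec{w}_{\mathrm{P}}$ and a detectable column $j_0$; by symmetry assume $w_{1,j_0}=w_{2,j_0}=0$, $w_{3,j_0}=1$. Write $\gamma = p^{2-3p}(1-p)^{1-3p}$. Repeating the Bayes calculation at the start of the proof of Lemma~\ref{lem:proof_type_II_majority}, I would show that conditioned on $\vec{w}_{\mathrm{P}}$ (and on $\mathsf{st}_{j_0}\in\{p,1-p\}$), the posterior satisfies $\Pr[p_{j_0}=p\mid\vec{w}_{\mathrm{P}}]=1-p$ and $\Pr[p_{j_0}=1-p\mid\vec{w}_{\mathrm{P}}]=p$. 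Combining this with the erasure-filling distribution in Step~\ref{item:tracing_replace_?}, the three possible attack choices $y_{j_0}\in\{0,1,?\}$ contribute, respectively, $1-p+p\gamma$, $p+(1-p)\gamma$, and $1-p+p\gamma$ to the relevant expectation (all other columns contribute the same factor, independent of this choice). Therefore replacing $y^1$ or $y^?$ at $j_0$ by $y^0$ (i.e., the majority bit) changes the target quantity of (\ref{eq:proof_type_IV_bound_3}) by a non-negative multiple of $(1-p+p\gamma)-(p+(1-p)\gamma)=(2p-1)(\gamma-1)$.

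The single nontrivial analytic step, and the main obstacle, is to verify that $\gamma\geq 1$ on $p\in[1/2,1)$, i.e.\ that $h(p):=(2-3p)\log p+(1-3p)\log(1-p)\geq 0$. This is elementary calculus rather than combinatorics: one checks $h(1/2)=0$, $h'(1/2)=6\log 2+2>0$, and $h(p)\to+\infty$ as $p\to 1^{-}$, then rules out any interior dip below zero by analyzing $h'$ (or equivalently rewriting $\gamma$ as $(p/(1-p))^{2-3p}(1-p)^{-1}$ and using $p\geq 1-p$). Given $p\geq 1/2$, this yields $(2p-1)(\gamma-1)\geq 0$, so each single-column exchange can only increase the expectation.

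Iterating the exchange over all detectable columns $j_0$ (independence of columns makes the iteration legitimate), while noting that undetectable columns are fixed by Marking Assumption and so are unaffected, shows that starting from any pirate strategy $\rho$ and replacing its output by the majority vote attack $\rho_{\mathrm{maj}}$ cannot decrease the right-hand side of (\ref{eq:proof_type_IV_bound_3}). Hence the maximum is attained by $\rho_{\mathrm{maj}}$, as required.
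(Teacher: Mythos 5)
Your proposal is correct and follows essentially the same route as the paper: isolate the single attack-dependent factor $\bigl(p^{2-3p}(1-p)^{1-3p}\bigr)^{a^{\mathrm{d}}_{\mathrm{H}}}$, show $p^{2-3p}(1-p)^{1-3p}\geq 1$ on $[1/2,1)$, and then run the Bayes/column-exchange argument of Lemma~\ref{lem:proof_type_II_majority} using that the majority bit equals $\xi^{\mathrm{H}}_j$ with posterior probability $p\geq 1-p$. You merely make explicit two steps the paper leaves implicit (the per-column expected contributions $1-p+p\gamma$ vs.\ $p+(1-p)\gamma$, and the calculus behind $\gamma\geq 1$), so no substantive difference.
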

By (\ref{eq:proof_type_IV_bound_3}), we have
\begin{equation}
\begin{split}
Pr[ 12\mathsf{I},13\mathsf{I},23\mathsf{I} \in \mathcal{T}(y') ]
&< \sum_{w_1,w_2,w_3} Pr[w_1,w_2,w_3] \sum_{y',\mathsf{st}} Pr[y',\mathsf{st} \mid w_1,w_2,w_3] \eta \\
&= \sum_{y',\mathsf{st},w_1,w_2,w_3} Pr[y',\mathsf{st},w_1,w_2,w_3] \eta \enspace.
\end{split}
\end{equation}
By virtue of Lemma \ref{lem:proof_type_IV_majority}, the maximum value of the right-hand side is attained by majority vote attack $\rho_{\mathrm{maj}}$.
Now for $\rho = \rho_{\mathrm{maj}}$ and given $\mathsf{st}$, the probability that $w_1$, $w_2$, $w_3$ and $y'$ attain the given values of $a^{\mathrm{u}}_{\mathrm{H}}$, $a^{\mathrm{u}}_{\mathrm{L}}$, and $a^{\mathrm{d}}_{\mathrm{H}}$ is
\begin{equation}
\binom{m}{a^{\mathrm{u}}_{\mathrm{H}},a^{\mathrm{u}}_{\mathrm{L}},a^{\mathrm{d}}_{\mathrm{H}},a^{\mathrm{d}}_{\mathrm{L}}} (p^3)^{a^{\mathrm{u}}_{\mathrm{H}}} ((1-p)^3)^{a^{\mathrm{u}}_{\mathrm{L}}}
(3p^2(1-p))^{a^{\mathrm{d}}_{\mathrm{H}}} (3p(1-p)^2)^{a^{\mathrm{d}}_{\mathrm{L}}}
\end{equation}
which is independent of $\mathsf{st}$, where we put $a^{\mathrm{d}}_{\mathrm{L}} = m - a^{\mathrm{u}}_{\mathrm{H}} - a^{\mathrm{u}}_{\mathrm{L}} - a^{\mathrm{d}}_{\mathrm{H}}$.
Hence we have
\begin{equation}
\begin{split}
&\sum_{y',\mathsf{st},w_1,w_2,w_3} Pr[y',\mathsf{st},w_1,w_2,w_3] \eta \\
&= \sum_{a^{\mathrm{u}}_{\mathrm{H}},a^{\mathrm{u}}_{\mathrm{L}},a^{\mathrm{d}}_{\mathrm{H}}} \Biggl( \binom{m}{a^{\mathrm{u}}_{\mathrm{H}},a^{\mathrm{u}}_{\mathrm{L}},a^{\mathrm{d}}_{\mathrm{H}},a^{\mathrm{d}}_{\mathrm{L}}} (p^3)^{a^{\mathrm{u}}_{\mathrm{H}}} ((1-p)^3)^{a^{\mathrm{u}}_{\mathrm{L}}} (3p^2(1-p))^{a^{\mathrm{d}}_{\mathrm{H}}} (3p(1-p)^2)^{a^{\mathrm{d}}_{\mathrm{L}}} \eta \Biggr) \\
&= (1-p)^{(3p-1)m} (1-p)^{-3 \sqrt{ (m/2) \log(N/\varepsilon_{\mathrm{L}}) } } \\
&\cdot \sum \Biggl( \binom{m}{a^{\mathrm{u}}_{\mathrm{H}},a^{\mathrm{u}}_{\mathrm{L}},a^{\mathrm{d}}_{\mathrm{H}},a^{\mathrm{d}}_{\mathrm{L}}} \left(p^{6-3p} (1-p)^{1-3p}\right)^{a^{\mathrm{u}}_{\mathrm{H}}} \left((1-p)^4\right)^{a^{\mathrm{u}}_{\mathrm{L}}} \left(3p^{4-3p}(1-p)^{2-3p}\right)^{a^{\mathrm{d}}_{\mathrm{H}}} \left(3p(1-p)^2\right)^{a^{\mathrm{d}}_{\mathrm{L}}} \Biggr)
\end{split}
\end{equation}
(where the sum runs over the possible values of $a^{\mathrm{u}}_{\mathrm{H}}$, $a^{\mathrm{u}}_{\mathrm{L}}$, $a^{\mathrm{d}}_{\mathrm{H}}$, and $a^{\mathrm{d}}_{\mathrm{L}}$)
\begin{equation}
\begin{split}
&= (1-p)^{(3p-1)m} (1-p)^{-3 \sqrt{ (m/2) \log(N/\varepsilon_0) } } \\
&\quad\cdot \Bigl( p^{6-3p} (1-p)^{1-3p} + (1-p)^4 + 3p^{4-3p}(1-p)^{2-3p} + 3p(1-p)^2 \Bigr)^m \\
&= (1-p)^{(3p-1)m} (1-p)^{-3 \sqrt{ (m/2) \log(N/\varepsilon_0) } }
\Bigl( p^{4-3p}(p^2 - 3p + 3)(1-p)^{1-3p} + (1-p)^2(p^2 + p + 1) \Bigr)^m \\
&= (1-p)^{-3 \sqrt{ (m/2) \log(N/\varepsilon_0) } } f_3(p)^m \enspace.
\end{split}
\end{equation}
By the above argument, the value $Pr[ 12\mathsf{I},13\mathsf{I},23\mathsf{I} \in \mathcal{T}(y') ]$ for a general $\rho$ is also bounded by the above value.
Hence Proposition \ref{prop:probability_type_IV} follows, as there exist $N-3$ choices of the innocent user $\mathsf{I}$.

To complete the proof of Proposition \ref{prop:probability_type_IV}, we give a proof of Lemma \ref{lem:proof_type_IV_majority}.
\begin{proof}
[Proof of Lemma \ref{lem:proof_type_IV_majority}]
First note that, by Marking Assumption, the terms in $\eta$ other than $\left( p^{2-3p} (1-p)^{1-3p} \right)^{a^{\mathrm{d}}_{\mathrm{H}}}$ are independent of the choice of $y'$ for given $w_1$, $w_2$, and $w_3$.
An elementary analysis shows that $p^{2-3p} (1-p)^{1-3p}$ is an increasing function of $p \in [1/2,1)$, therefore $p^{2-3p} (1-p)^{1-3p} \geq (1/2)^{2-3/2}(1/2)^{1-3/2} = 1$.
Hence the value of $\eta$ will be increased by making the value of $a^{\mathrm{d}}_{\mathrm{H}}$ as large as possible.
By the same argument as Lemma \ref{lem:proof_type_II_majority}, under the condition that the $j$-th column is detectable, the probabilities that the majority among $w_{1,j}$, $w_{2,j}$, and $w_{3,j}$ is $\xi^{\mathrm{H}}_j$ and $\xi^{\mathrm{L}}_j$ are $p$ and $1-p$, respectively.
In other words, the probabilities that $\xi^{\mathrm{H}}_j$ is the majority and the minority among $w_{1,j}$, $w_{2,j}$, and $w_{3,j}$ are $p$ and $1-p$, respectively.
As $p \geq 1-p$, it follows that the value of the right-hand side of (\ref{eq:proof_type_IV_bound_3}) will not decrease by setting the $j$-th bit of $y'$ to be the majority of $w_{1,j}$, $w_{2,j}$, and $w_{3,j}$ instead of the minority of them (the detail of the proof is similar to the proof of Lemma \ref{lem:proof_type_II_majority}).
Hence the maximum value of the right-hand side of (\ref{eq:proof_type_IV_bound_3}) is attained by the majority vote attack, concluding the proof of Lemma \ref{lem:proof_type_IV_majority}.
\end{proof}

\subsection{Proof of Proposition \ref{prop:two_pirates_score}}
\label{subsec:proof_two_pirates_score}

First we introduce some notations.
Given the codewords $w_1$ and $w_2$ of the two pirates $1$ and $2$, let $a_{\mathrm{u}}$ and $a_{\mathrm{d}}$ denote the numbers of undetectable and detectable columns, respectively.
Then by Marking Assumption and the choice $p = 1/2$, we have $S(1) + S(2) = (2 a_{\mathrm{u}} + a_{\mathrm{d}}) \log 2$ regardless of the pirate strategy $\rho$.
This implies that, if $S(1) < Z$ and $S(2) < Z$, then we have
\begin{equation}
(2 a_{\mathrm{u}} + a_{\mathrm{d}}) \log 2 < 2 Z \leq 2 Z_0
= m \log 2 + \sqrt{ 2 m \log \frac{ N }{ \varepsilon_0 } } \log 2 \enspace.
\end{equation}
By the relation $a_{\mathrm{u}} + a_{\mathrm{d}} = m$, this implies that $2 m - a_{\mathrm{d}} < m + \sqrt{ 2 m \log(N/\varepsilon_0) }$, or equivalently $a_{\mathrm{d}} - m/2 > m/2 - \sqrt{ 2 m \log(N/\varepsilon_0) }$.
Now for each $j \in [m]$, the probability that the $j$-th column becomes detectable is $1/2$, therefore the expected value of $a_{\mathrm{d}}$ is $m/2$.
Then Hoeffding's Inequality (Theorem \ref{thm:Hoeffding}) implies that
\begin{equation}
\label{eq:proof_two_pirates_score_bound_1}
\begin{split}
Pr[ S(1) < Z \mbox{ and } S(2) < Z ]
&\leq Pr[ a_{\mathrm{d}} - m/2 > m/2 - \sqrt{ 2 m \log(N/\varepsilon_0) } ] \\
&\leq \exp\left( \frac{ -2 m^2 \left( m/2 - \sqrt{ 2 m \log(N/\varepsilon_0) } \right)^2 }{ m } \right) \\
&= \exp\left( \frac{ -m^2 \left( \sqrt{m} - \sqrt{ 8 \log(N/\varepsilon_0) } \right)^2 }{ 2 } \right)
\end{split}
\end{equation}
provided $m/2 - \sqrt{ 2m \log(N/\varepsilon_0) } > 0$.
The last condition is equivalent to that $m > 8 \log(N/\varepsilon_0)$ which is satisfied under the condition (\ref{eq:error_probability_condition_m}).
Now put $m = 8 \alpha \log(N/\varepsilon_0)$ with $\alpha > 1$.
Then under the condition (\ref{eq:error_probability_condition_m}), we have
\begin{equation}
\begin{split}
\frac{ m^2 \left( \sqrt{m} - \sqrt{8 \log (N/\varepsilon_0)} \right)^2 }{ 2 }
&= \frac{ m^2 }{ 2 } \left( \sqrt{\alpha} \cdot \sqrt{ 8 \log \frac{N}{\varepsilon_0} } - \sqrt{ 8 \log \frac{N}{\varepsilon_0} } \right)^2 \\
&= 4 m^2 \left( \sqrt{\alpha} - 1 \right)^2 \log\frac{N}{\varepsilon_0} \\
&> 16^2 \left( \log\frac{N}{\varepsilon_0} \right)^3 \left( 1 + \frac{ 1 }{ 16 \log(N/\varepsilon_0) } - 1 \right)^2
= \log\frac{N}{\varepsilon_0} \enspace,
\end{split}
\end{equation}
therefore the right-hand side of (\ref{eq:proof_two_pirates_score_bound_1}) is smaller than $\varepsilon_0/N$.
Hence the proof of Proposition \ref{prop:two_pirates_score} is concluded.

\subsection{Proof of Proposition \ref{prop:one_pirate_score}}
\label{subsec:proof_one_pirate_score}

Let $1 \in U$ be the unique pirate.
Then by Marking Assumption and the choice $p = 1/2$, we have $y' = w_1$ and $S(1) = m \log 2$, while $Z \leq Z_0 = (m/2) \log 2+ \sqrt{ (m/2) \log(N/\varepsilon_0) } \log 2$.
Now by the assumption $m \geq 2 \log(N/\varepsilon_0)$, we have
\begin{equation}
\frac{ S(1) - Z_0 }{ \log 2 }
= \frac{m}{2} - \sqrt{ \frac{m}{2} \log \frac{N}{\varepsilon_0} }
= \sqrt{ \frac{m}{2} } \left( \sqrt{ \frac{m}{2} } - \sqrt{ \log \frac{N}{\varepsilon_0} } \right) \geq 0 \enspace,
\end{equation}
therefore $S(1) \geq Z_0 \geq Z$.
Hence the proof of Proposition \ref{prop:one_pirate_score} is concluded.

\section{Conclusion}
\label{sec:conclusion}

In this article, we proposed a new construction of probabilistic $3$-secure codes and presented a theoretical evaluation of their error probabilities.
A characteristic of our tracing algorithm is to make use of both score comparison and search of the triples of \lq\lq parents'' for a given pirated fingerprint word.
Some numerical examples showed that code lengths of our proposed codes are significantly shorter than the previous provably secure $3$-secure codes.
Moreover, for the sake of improving efficiency of our tracing algorithm, we also proposed an implementation method for the algorithm, which seems indeed more efficient for an average case than the naive implementation.
A detailed evaluation of the proposed implementation method will be a future research topic.

\paragraph*{Acknowledgements.}
A preliminary version of this paper was presented at The 12th Information Hiding (IH 2010), Calgary, Canada, June 28--30, 2010 \cite{Nui10}.
The author would like to express his deep gratitude to Dr.\ Teddy Furon, who gave several invaluable comments and suggestions as the shepherd of the author's paper in that conference.
Also, the author would like to thank the anonymous referees at that conference for their precious comments.

\end{document}